\makeindex \addtolength{\hoffset}{-1cm}
\newtheorem{theorem}{Theorem}[section]
\newtheorem{corollary}[theorem]{Corollary}
\newtheorem{definition}[theorem]{Definition}
\newtheorem{example}[theorem]{Example}
\newtheorem{lemma}[theorem]{Lemma}
\newtheorem{notation}[theorem]{Notation}
\newtheorem{proposition}[theorem]{Proposition}
\newtheorem{remark}[theorem]{Remark}
\newenvironment{proof}[1][Proof]{\noindent\textbf{#1.} }{\ \rule{0.5em}{0.5em}}
\begin{document}

\title{\textbf{Improved Decoding Algorithm of} \\
\textbf{BD-LRPC Codes} }
\author{{\ Hermann Tchatchiem Kamche\thanks{%
Centre for Cybersecurity and Mathematical Cryptology, The University of
Bamenda, Bamenda, Cameroon}} \\
%EndAName
{\small hermann.tchatchiem@gmail.com} }
\maketitle

\begin{abstract}
A Bounded-Degree Low-Rank Parity-Check (BD-LRPC) code is a rank-metric code
that admits a parity-check matrix whose support is generated by a set of powers of an element. This specific structure of the parity-check matrix was employed to enhance the first phase of the decoding algorithm through the expansion of the syndrome support. However, this expansion decreases the probability of
recovering the error support in the second phase of the decoding algorithm. This paper introduces a novel method based on successive intersections to recover the error support. This method offers two key advantages: it increases the probability of successful decoding and
enables the decoding of a greater number of errors.
\end{abstract}

\textbf{Key words}: Rank-metric codes, LRPC codes, BD-LRPC codes, Decoding
problem

\bigskip

\section{Introduction}

Rank-metric codes are a class of codes whose codewords can be represented as matrices, and the distance between two codewords is defined by the rank of their difference.
Gabidulin codes are the pioneering family of rank-metric codes \cite{Delsarte1978bilinear,Gbidulin1985theory}. Another significant family of rank-metric codes is the Low-Rank Parity-Check (LRPC) codes  \cite{Gaborit2013low}. Introduced in 2013 by Gaborit et al., LRPC codes have found applications in both cryptography \cite{Gaborit2013low,Aragon2019low} and network coding  \cite{Yazbek2017low,El2018efficient}.

Recently, Franch and Li introduced the Bounded-Degree Low-Rank Parity-Check (BD-LRPC) codes \cite{Franch2025bounded}, a subfamily of LRPC codes characterized by a parity-check matrix whose support is generated by a set of powers of an element. Then, they proved that BD-LRPC codes can be used to provide a solution to the problem raised by
Xing and Yuan \cite{Xing2017new}. This problem involves constructing rank-
metric codes with a constant column-to-row ratio that can be efficiently
decoded beyond half the minimum distance. More precisely, let $\mathcal{C}$ be a rank
metric code consisting of matrices of size $m\times n$ over a finite field $%
\mathbb{F}_{q}$. The code rate of $\mathcal{C}$ is $R=\frac{\log _{q}|%
\mathcal{C|}}{mn}$ and the column-to-row ratio is $b=\frac{n}{m}$. Now the
problem is: for a given constant ratio $b=\frac{n}{m}$, explicitly construct
rank-metric codes of rate $R$ with decoding radius $\rho >(1-R)/2$ and
efficiently list decode them. In \cite[Remark 5.]{Franch2025bounded}, Franch
and Li proved that BD-LRPC codes can be used to solve this problem. 

The main advantage of BD-LRPC codes is its decoding algorithm. Recall that in the rank metric, the support of a vector is the subspace generated by its components. The decoding process for LRPC codes is structured into three phases. The first phase involves utilizing the syndrome to identify the product of the error support and the support of the parity-check matrix.
The second phase focuses on recovering the error support, while the third phase is dedicated to
determining the error based on its support. To enhance the first phase of the decoding algorithm, Aragon et al. proposed two expansion functions that transform the syndrome support into another subspace \cite{Aragon2019low}.  In \cite{Franch2025bounded}, Franch and Li exploited the specific structure of the parity check matrix of BD-LRPC codes to define a new expansion function, which significantly improved the success probability of the first decoding phase. However, they only calculated the probability of the first phase for certain specific parameters and made a conjecture on the bounds of this probability in the general case. Furthermore, they did not provide the exact value of the probability of the third phases of the decoding algorithm. Finally, the probability of the second phase is affected by the parameter used in the expansion.

In this paper, we have given a lower bound on the probability of the first phase of the decoding algorithm of BD-LRPC codes, based on the work of \cite{Arora2021unimodular,Semaev2021probabilistic}. The method used to obtain this lower bound allowed us to resolve the conjecture of Franch and Li \cite[Conjecture 1]{Franch2025bounded}. Then, we have demonstrated that the probability of the second phase of the decoding algorithm can be enhanced through the use of carefully designed successive intersections. The primary advantage of this approach is its ability to increase the probability of successful decoding while also enabling the correction of a greater number of errors. Lastly, we have used the work of \cite{Renner2020low} to prove that the probability obtained during the second phase can be used to determine the success probability of the third phase. This connection was used to improve the overall probability of the decoding algorithm based on the existing literature \cite{Franch2025bounded,Burle2023upper}.

The organization of the paper is as follows. In Section \ref{SPreli}, we present essential definitions and notations, along with a description of the new method to recover support via successive intersections. In Section \ref{SDeco}, we describe the decoding algorithm for BD-LRPC codes. In Section \ref{SProb}, we establish a lower bound on the success probability of the decoding algorithm. In Section \ref{ProbaPt}, we give some properties of the probability of the second phase of the decoding algorithm. Finally, we conclude in Section \ref{SCon}.

\section{Preliminaries \label{SPreli}}

\subsection{Rank-Metric Codes}

In this paper, $\mathbb{F}_{q}$ is a finite field with $q$ elements. The set
of all $m\times n$ matrices with entries in $\mathbb{F}_{q}$ will be denoted
by $\mathbb{F}_{q}^{m\times n}$. Let $\mathbf{A}\in \mathbb{F}_{q}^{m\times
n}$, the rank and the transpose of $\mathbf{A}$ are respectively denoted by $rank(\mathbf{A)}$ and $\mathbf{A}^{\intercal }$. The identity matrix of size
$n$ is denoted by $\mathbf{I}_{n}$. Note that $\mathbb{F}_{q}^{m\times n}$
is isomorphic to $\mathbb{F}_{q^{m}}^{n}$. So, to define the rank-metric
codes, we will use the vector representation. 

Let \ $\mathbf{u}=\left( u_{1},\ldots ,u_{n}\right) \in \mathbb{F}%
_{q^{m}}^{n}$. The support of $\mathbf{u}$, denoted by $supp(%
\mathbf{u)}$, is the $\mathbb{F}_{q}$-subspace of $\mathbb{F}_{q^{m}}$
generated by $\left\{ u_{1},\ldots ,u_{n}\right\} $, that is, $supp(\mathbf{%
u)=}\left\langle u_{1},\ldots ,u_{n}\right\rangle _{\mathbb{F}_{q}}$. The
rank of $\mathbf{u}$, denoted by $rank\left( \mathbf{u}\right) $, is the
dimension of the support of $\mathbf{u}$. The rank distance between two
vectors $\mathbf{u}$, $\mathbf{v}$ in $\mathbb{F}_{q^{m}}^{n}$ is the rank
of their differences, that is, $rank\left( \mathbf{u-v}\right) $. A linear code
$\mathcal{C}$ of length $n$ and dimension $k$ is an $\mathbb{F}_{q^{m}}$%
-subspace of $\mathbb{F}_{q^{m}}^{n}$ of dimension $k$. A generator matrix
of $\mathcal{C}$ is an $k\times n$ matrix $\mathbf{G}$ with entries in $%
\mathbb{F}_{q^{m}}$ whose rows generate $\mathcal{C}$. A parity-check matrix
of $\mathcal{C}$ is an $(n-k)\times n$ matrix $\mathbf{H}$ of rank $n-k$
with entries in $\mathbb{F}_{q^{m}}$ such that $\mathbf{GH}^{\intercal }=%
\mathbf{0}$. An element $\mathbf{c}$ in $\mathcal{C}$ is called a codeword.
In the rank metric, when the codeword $\mathbf{c\in }\mathcal{C}$ is
transmitted, the received word is of the form $\mathbf{y=c+e}$, where $%
\mathbf{e}\in \mathbb{F}_{q^{m}}^{n}$ is an error vector of rank $r$. The
syndrome of the received word is $\mathbf{s=yH}^{\intercal }=\mathbf{eH}%
^{\intercal }$. The syndrome decoding problem is to use syndrome $\mathbf{s}$
to recover the error $\mathbf{e}$.

To end this subsection, we will recall that the Gaussian binomial coefficient given by
\begin{equation*}
\left[
\begin{array}{c}
n \\
k
\end{array}
\right] _{q}:=\prod\limits_{i=0}^{k-1}\frac{q^{n}-q^{i}}{q^{k}-q^{i}}
\end{equation*}
is equal to the number of
subspaces of dimension $k$ in a vector space of dimension $n$ over $\mathbb{F}_{q}$. According to \cite{Koetter2008coding}, its value lies between $%
q^{k(n-k)}$ and $4q^{k(n-k)}$.

\subsection{Product of Two Subspaces}

In this subsection, we give some properties of the product of two subspaces.
We will first recall the definition of a specific subset used to define
BD-LRPC codes.

\begin{definition}
\cite{Franch2025bounded} Let $d$ be a positive integer and $\alpha $ an
element of $\mathbb{F}_{q^{m}}$. The bounded-degree subspace generated by $%
\alpha $ of degree $d$, denoted by $\mathcal{V}_{\alpha ,d}$, is the $%
\mathbb{F}_{q}$-subspace of $\mathbb{F}_{q^{m}}$ of dimension $d$ generated by $1,\alpha
,\ldots ,\alpha ^{d-1}$, that is,
\begin{equation*}
\mathcal{V}_{\alpha ,d}\mathbf{=}\left\langle 1,\alpha ,\ldots ,\alpha
^{d-1}\right\rangle _{\mathbb{F}_{q}}\text{.}
\end{equation*}
\end{definition}

In the following, $\alpha $ is an element of $\mathbb{F}_{q^{m}}$ of degree $m$.

\begin{definition}
\cite{Gaborit2013low} Let $\mathcal{E}$ and $\mathcal{W}$ be two $\mathbb{F}%
_{q}$-subspaces of $\mathbb{F}_{q^{m}}$. The product of $\mathcal{E}$ and $%
\mathcal{W}$, denoted by $\mathcal{EW}$, is the $\mathbb{F}_{q}$-subspace of
$\mathbb{F}_{q^{m}}$ generated by $\left\{ ew:e\in \mathcal{E},\ w\in \mathcal{W}\right\} $.
\end{definition}

Let $\mathcal{E}$ and $\mathcal{W}$ be two $\mathbb{F}_{q}$-subspaces of $%
\mathbb{F}_{q^{m}}$. Then, $\dim \left( \mathcal{EW}\right) \leq \dim \left(
\mathcal{E}\right) \dim \left( \mathcal{W}\right) $. In \cite{Gaborit2013low}%
, a lower bound on the probability that $\dim \left( \mathcal{EW}\right)
=\dim \left( \mathcal{E}\right) \dim \left( \mathcal{W}\right) $ was given.
This lower bound has been improved in \cite{Burle2023upper}. Thus, according
to \cite[Proposition 4]{Burle2023upper}, we have the following:

\begin{proposition}
\label{ProductSubspaces}Let $\mathcal{E}$ be drawn uniformly at random from
the set of $\mathbb{F}_{q}$-subspaces of $\mathbb{F}_{q^{m}}$ of dimension $%
r $. Let $\mathcal{W}$ be a fixed $\mathbb{F}_{q}$-subspace of $\mathbb{F}%
_{q^{m}}$ of dimension $w$ such that $rw\leq m$.\ Then, the probability that
$\dim \left( \mathcal{EW}\right) =rw$ is lower bounded by
\begin{equation*}
\Pr \left( \dim \left( \mathcal{EW}\right) =rw\right) \geq 1-\frac{q^{rw}}{%
q^{m}-q^{r-1}}.
\end{equation*}
\end{proposition}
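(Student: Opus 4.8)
The plan is to reformulate the event $\dim(\mathcal{EW}) = rw$ as a statement about linear relations and then estimate the complementary failure probability by a union bound. Concretely, fix a basis $w_1,\dots,w_w$ of $\mathcal{W}$ and let $e_1,\dots,e_r$ be a basis of $\mathcal{E}$. Since $\mathcal{EW}$ is generated by the $rw$ products $e_iw_j$, we have $\dim(\mathcal{EW}) < rw$ if and only if these products are $\mathbb{F}_q$-linearly dependent, that is, $\sum_{i,j}\lambda_{ij}e_iw_j = 0$ for some nonzero family $(\lambda_{ij})$. Grouping the terms as $v_i := \sum_j \lambda_{ij}w_j \in \mathcal{W}$ and using that $w_1,\dots,w_w$ are independent, this is equivalent to the existence of a tuple $(v_1,\dots,v_r) \in \mathcal{W}^r \setminus \{\mathbf{0}\}$ with $\sum_{i=1}^r e_i v_i = 0$. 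So the first step is to record this equivalence, which reduces the problem to bounding the probability that at least one such relation holds.

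Next I would fix the probabilistic model. Because every $r$-dimensional subspace is the span of exactly the same number of ordered linearly independent $r$-tuples, drawing $\mathcal{E}$ uniformly is the same as drawing an ordered tuple $(e_1,\dots,e_r)$ where $e_i$ is uniform in $\mathbb{F}_{q^m}\setminus \mathcal{E}_{i-1}$ with $\mathcal{E}_{i-1} = \langle e_1,\dots,e_{i-1}\rangle_{\mathbb{F}_q}$; the target event depends only on $\mathcal{E}=\langle e_1,\dots,e_r\rangle_{\mathbb{F}_q}$. With this model, the key estimate is the probability that a single fixed nonzero tuple $(v_1,\dots,v_r)$ produces a relation. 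Let $i_0$ be the largest index with $v_{i_0}\neq 0$. Conditioning on $e_1,\dots,e_{i_0-1}$, the relation $\sum_i e_iv_i = 0$ forces $e_{i_0} = -v_{i_0}^{-1}\sum_{i<i_0} e_iv_i$, a single prescribed value, because $v_{i_0}$ is invertible in the field $\mathbb{F}_{q^m}$. Since $e_{i_0}$ is uniform on a set of size $q^m - q^{i_0-1}\geq q^m - q^{r-1}$, this conditional probability is at most $1/(q^m-q^{r-1})$ regardless of the conditioning, so averaging preserves the bound.

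The final step is a union bound over the fewer than $q^{rw}$ nonzero tuples in $\mathcal{W}^r$:
\[
\Pr\big(\dim(\mathcal{EW}) < rw\big)\;\le\;(q^{rw}-1)\cdot\frac{1}{q^m-q^{r-1}}\;<\;\frac{q^{rw}}{q^m-q^{r-1}},
\]
and passing to complements yields the claimed lower bound. I expect the main difficulty to lie in the two modeling steps rather than in the arithmetic: first, justifying cleanly that a uniform subspace can be generated coordinate by coordinate so that the per-coordinate conditioning is legitimate; and second, pinning down the equivalence between dependence of the $e_iw_j$ and the existence of a $\mathcal{W}$-tuple relation (in particular that a nonzero coefficient family corresponds to a nonzero tuple, which uses that $w_1,\dots,w_w$ is a basis). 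The hypothesis $rw\leq m$ is what keeps the resulting estimate nontrivial, and isolating the top index $i_0$ is precisely what allows the single denominator $q^m-q^{r-1}$ to dominate every term of the union bound.
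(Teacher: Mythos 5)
Your proposal is correct. Note first that the paper does not actually prove this proposition: it is imported verbatim from the literature (Proposition 4 of Burle--Otmani, reference \cite{Burle2023upper}), so there is no internal argument to compare against. Your blind proof is a valid self-contained derivation and, as far as one can tell from the shape of the bound, it is essentially the standard first-moment argument behind the cited result: the equivalence between $\dim(\mathcal{EW})<rw$ and the existence of a nonzero tuple $(v_{1},\ldots ,v_{r})\in \mathcal{W}^{r}$ with $\sum_{i}e_{i}v_{i}=0$ is right (the passage from a nonzero coefficient family $(\lambda_{ij})$ to a nonzero tuple uses exactly the linear independence of the $w_{j}$, as you say); the sequential sampling model for a uniform subspace is legitimate because every $r$-dimensional subspace has the same number of ordered bases and the event depends only on the span; and isolating the top index $i_{0}$ with $v_{i_{0}}\neq 0$ correctly pins $e_{i_{0}}$ to a single value in a conditional sample space of size $q^{m}-q^{i_{0}-1}\geq q^{m}-q^{r-1}$, which is where the denominator $q^{m}-q^{r-1}$ comes from. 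The union bound over the $q^{rw}-1$ nonzero tuples then gives $\Pr\left( \dim \left( \mathcal{EW}\right) <rw\right) \leq (q^{rw}-1)/(q^{m}-q^{r-1})$, which is marginally sharper than the stated inequality. The one point worth making explicit in a polished write-up is that the failure event $\sum_{i\leq i_{0}}e_{i}v_{i}=0$ depends only on $e_{1},\ldots ,e_{i_{0}}$, so averaging the conditional bound over $e_{1},\ldots ,e_{i_{0}-1}$ (and ignoring $e_{i_{0}+1},\ldots ,e_{r}$) is justified; you state this implicitly and it is fine.
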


With the same assumptions as in the preview proposition, let $\{b_{1},\ldots
,b_{w}\}$ be a basis of $\mathcal{W}$. When both subspaces $\mathcal{W}$ and
$\mathcal{EW}$ are known, then it is possible in some cases to recover $%
\mathcal{E}$. In \cite{Gaborit2013low}, Gaborit et al. proposed to calculate
$\cap _{i=1}^{w}b_{i}^{-1}\mathcal{EW}$ in order to recover $\mathcal{E}$
with a probability at least $1-rq^{-m+rw(w+1)/2}$. This lower bound has been
improved in \cite{Burle2023upper} by $1-q^{r(2w-1)}/(q^{m}-q^{r-1})$.

When $\mathcal{W}=\mathcal{V}_{\alpha ,w}$, Franch and Li proposed in \cite%
{Franch2025bounded} to calculate $\mathcal{V}_{\alpha ,w}\mathcal{E}\cap
\alpha ^{-(w-1)}\mathcal{V}_{\alpha ,w}\mathcal{E}$ in order to recover $%
\mathcal{E}$ with an estimated probability $1-q^{-m+r(2w-1)}$. In the
following, we will give a method that recovers $\mathcal{E}$ with a
probability at least $1-q^{r(w+1)}/(q^{m}-q^{r-1})$.

\begin{lemma}
\label{SupportRecover1} Let $j$ be a positive integer and $\mathcal{E}$ be a
$\mathbb{F}_{q}$-subspace of $\mathbb{F}_{q^{m}}$ of dimension $r$. If
\bigskip $dim(\mathcal{V}_{\alpha ,j+1}\mathcal{E)}=r(j+1)$ then
\begin{equation*}
(\alpha ^{-1}\mathcal{V}_{\alpha ,j}\mathcal{E)}\cap \mathcal{V}_{\alpha ,j}%
\mathcal{E}=\mathcal{V}_{\alpha ,j-1}\mathcal{E}.
\end{equation*}
\end{lemma}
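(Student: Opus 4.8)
The plan is to translate everything into Minkowski sums of the shifted subspaces $\alpha^{i}\mathcal{E}$ and then exploit a single directness hypothesis. Since $\mathcal{V}_{\alpha,d}=\langle 1,\alpha,\dots,\alpha^{d-1}\rangle_{\mathbb{F}_q}$, the product $\mathcal{V}_{\alpha,d}\mathcal{E}$ is exactly the sum $\mathcal{E}+\alpha\mathcal{E}+\cdots+\alpha^{d-1}\mathcal{E}$. In particular I would record
\[ \mathcal{V}_{\alpha,j}\mathcal{E}=\sum_{i=0}^{j-1}\alpha^{i}\mathcal{E},\quad \alpha^{-1}\mathcal{V}_{\alpha,j}\mathcal{E}=\sum_{i=-1}^{j-2}\alpha^{i}\mathcal{E},\quad \mathcal{V}_{\alpha,j-1}\mathcal{E}=\sum_{i=0}^{j-2}\alpha^{i}\mathcal{E}. \]
Both subspaces appearing in the intersection are therefore contained in $\sum_{i=-1}^{j-1}\alpha^{i}\mathcal{E}=\alpha^{-1}\mathcal{V}_{\alpha,j+1}\mathcal{E}$. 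The first observation I would establish is that the hypothesis makes this ambient space a direct sum: multiplication by $\alpha^{-1}$ is an $\mathbb{F}_q$-linear automorphism of $\mathbb{F}_{q^m}$, so $\dim(\alpha^{-1}\mathcal{V}_{\alpha,j+1}\mathcal{E})=\dim(\mathcal{V}_{\alpha,j+1}\mathcal{E})=r(j+1)$; since this space is the sum of the $j+1$ subspaces $\alpha^{i}\mathcal{E}$ for $-1\le i\le j-1$, each of dimension $r$, attaining the maximal possible dimension $r(j+1)$ forces $\bigoplus_{i=-1}^{j-1}\alpha^{i}\mathcal{E}$ to be a direct sum.

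With this in hand, the inclusion $\supseteq$ is immediate, because $\mathcal{V}_{\alpha,j-1}\mathcal{E}=\sum_{i=0}^{j-2}\alpha^{i}\mathcal{E}$ sits inside both $\sum_{i=-1}^{j-2}\alpha^{i}\mathcal{E}$ and $\sum_{i=0}^{j-1}\alpha^{i}\mathcal{E}$. For the reverse inclusion I would take $x$ in the intersection and invoke uniqueness of coordinates in the direct sum: writing $x=\sum_{i=-1}^{j-1}x_{i}$ with $x_{i}\in\alpha^{i}\mathcal{E}$, membership of $x$ in $\alpha^{-1}\mathcal{V}_{\alpha,j}\mathcal{E}=\sum_{i=-1}^{j-2}\alpha^{i}\mathcal{E}$ forces the top coordinate $x_{j-1}$ to vanish, while membership in $\mathcal{V}_{\alpha,j}\mathcal{E}=\sum_{i=0}^{j-1}\alpha^{i}\mathcal{E}$ forces the bottom coordinate $x_{-1}$ to vanish. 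Hence $x=\sum_{i=0}^{j-2}x_{i}\in\mathcal{V}_{\alpha,j-1}\mathcal{E}$, which closes the argument.

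The argument is essentially self-contained once directness is established, so there is no deep obstacle; the care needed is mainly bookkeeping. The two points I would watch are the following. First, the stated hypothesis concerns $\mathcal{V}_{\alpha,j+1}\mathcal{E}$, whereas the natural ambient direct sum is its $\alpha^{-1}$-shift, so I must justify the dimension transfer through invariance under multiplication by $\alpha^{-1}$ rather than assume it. Second, the degenerate case $j=1$, where $\mathcal{V}_{\alpha,j-1}\mathcal{E}=\mathcal{V}_{\alpha,0}\mathcal{E}=\{0\}$ and the empty sum $\sum_{i=0}^{j-2}$ is read as $\{0\}$, so that the conclusion becomes $\alpha^{-1}\mathcal{E}\cap\mathcal{E}=\{0\}$; the coordinate argument still applies verbatim with the two-element index set $\{-1,0\}$.
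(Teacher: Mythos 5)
Your proposal is correct and is essentially the paper's own argument: both rest on the fact that the hypothesis forces $\mathcal{E},\alpha\mathcal{E},\dots,\alpha^{j}\mathcal{E}$ to be in direct sum, and then kill the two extreme coordinates of an element of the intersection. The only cosmetic difference is that the paper multiplies the element by $\alpha$ to work inside $\mathcal{V}_{\alpha,j+1}\mathcal{E}$, whereas you shift the ambient space by $\alpha^{-1}$ and invoke uniqueness of coordinates there; these are the same step read in opposite directions.
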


\begin{proof}
Assume that \bigskip $dim(\mathcal{V}_{\alpha ,j+1}\mathcal{E)}=r(j+1)$.
Then $\mathcal{V}_{\alpha ,j+1}\mathcal{E}$ is a direct sum of $\mathcal{E}%
,\ \alpha \mathcal{E},\ \ldots ,\ \alpha ^{j}\mathcal{E}$, that is,
\begin{equation*}
\mathcal{V}_{\alpha ,j+1}\mathcal{E}=\mathcal{E}\oplus \alpha \mathcal{E}%
\oplus \cdots \oplus \alpha ^{j}\mathcal{E}.
\end{equation*}%
Let%
\begin{equation*}
x\in (\alpha ^{-1}\mathcal{V}_{\alpha ,j}\mathcal{E})\cap \mathcal{V}%
_{\alpha ,j}\mathcal{E}.
\end{equation*}%
Then, there are $x_{1},x_{2}\ldots ,x_{j}$ in $\mathcal{E}$ and $%
x_{1}^{\prime },x_{2}^{\prime }\ldots ,x_{j}^{\prime }$ in $\mathcal{E}$
such that
\begin{equation*}
x=\alpha ^{-1}x_{1}+x_{2}+\cdots +\alpha ^{j-2}x_{j}=x_{1}^{\prime }+\alpha
x_{2}^{\prime }+\cdots +\alpha ^{j-1}x_{j}^{\prime }.
\end{equation*}%
Therefore,
\begin{equation*}
x_{1}+\alpha x_{2}+\cdots +\alpha ^{j-1}x_{j}=\alpha x_{1}^{\prime }+\alpha
^{2}x_{2}^{\prime }+\cdots +\alpha ^{j}x_{j}^{\prime }.
\end{equation*}%
By the direct sum decomposition of $\ \mathcal{V}_{\alpha ,j+1}\mathcal{E}$,
we have $x_{1}=0$ and $\alpha ^{j}x_{j}^{\prime }=0$. Thus,
\begin{equation*}
x=x_{2}+\cdots +\alpha ^{j-2}x_{j}\in \mathcal{V}_{\alpha ,j-1}\mathcal{E}.
\end{equation*}%
As $\mathcal{V}_{\alpha ,j-1}\mathcal{E\subset }(\alpha ^{-1}\mathcal{V}%
_{\alpha ,j}\mathcal{E})\cap \mathcal{V}_{\alpha ,j}\mathcal{E}$, we have $%
(\alpha ^{-1}\mathcal{V}_{\alpha ,j}\mathcal{E})\cap \mathcal{V}_{\alpha ,j}%
\mathcal{E}=\mathcal{V}_{\alpha ,j-1}\mathcal{E}$.
\end{proof}

\begin{theorem}
\label{SupportRecover2} Let $\mathcal{E}$ be drawn uniformly at random from
the set of $\mathbb{F}_{q}$-subspaces of $\mathbb{F}_{q^{m}}$ of dimension $%
r $ and let $w$ be a positive integer such that $r(w+1)\leq m$. Set $%
\mathcal{F}_{w}=\mathcal{V}_{\alpha ,w}\mathcal{E}$ and
\begin{equation*}
\mathcal{F}_{j-1}=(\alpha ^{-1}\mathcal{F}_{j})\cap \mathcal{F}_{j},\ \ \ \
\ for\ \ j=w,\ldots ,2.
\end{equation*}

(a) If \bigskip $dim(\mathcal{V}_{\alpha ,w+1}\mathcal{E)}=r(w+1)$, then $%
\mathcal{F}_{1}=\mathcal{E}$.

(b) The probability that $\mathcal{F}_{1}=\mathcal{E}$ is lower bounded by:
\begin{equation*}
\Pr (\mathcal{F}_{1}=\mathcal{E})\geq 1-\frac{q^{r(w+1)}}{q^{m}-q^{r-1}}.
\end{equation*}
\end{theorem}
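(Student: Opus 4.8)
The plan is to establish part (a) first, by a downward induction on $j$ showing that $\mathcal{F}_j = \mathcal{V}_{\alpha,j}\mathcal{E}$ for every $j$ from $w$ down to $1$, and then to deduce part (b) by combining part (a) with Proposition \ref{ProductSubspaces}. The engine driving the induction is Lemma \ref{SupportRecover1}, which identifies exactly the intersection appearing in the recursion that defines $\mathcal{F}_{j-1}$.

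For part (a), I would first record the base case: by definition $\mathcal{F}_w = \mathcal{V}_{\alpha,w}\mathcal{E}$. For the inductive step, assuming $\mathcal{F}_j = \mathcal{V}_{\alpha,j}\mathcal{E}$, the recursion gives
\[
\mathcal{F}_{j-1} = (\alpha^{-1}\mathcal{F}_j)\cap\mathcal{F}_j = (\alpha^{-1}\mathcal{V}_{\alpha,j}\mathcal{E})\cap\mathcal{V}_{\alpha,j}\mathcal{E},
\]
which by Lemma \ref{SupportRecover1} equals $\mathcal{V}_{\alpha,j-1}\mathcal{E}$, provided the hypothesis $\dim(\mathcal{V}_{\alpha,j+1}\mathcal{E}) = r(j+1)$ of that lemma is in force. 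Carrying the induction down to $j=2$ then yields $\mathcal{F}_1 = \mathcal{V}_{\alpha,1}\mathcal{E} = \mathcal{E}$, since $\mathcal{V}_{\alpha,1} = \langle 1\rangle_{\mathbb{F}_q}$ (the case $w=1$ being trivial, as the recursion is then empty).

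The step I expect to need the most care is verifying that the hypothesis of Lemma \ref{SupportRecover1} holds at \emph{every} level $j\in\{2,\ldots,w\}$, not merely at the top level $j=w$ where it is assumed outright. This is where the single hypothesis $\dim(\mathcal{V}_{\alpha,w+1}\mathcal{E}) = r(w+1)$ must be propagated downward. The key observation is that $\dim(\mathcal{V}_{\alpha,w+1}\mathcal{E}) = r(w+1)$ is equivalent to the sum $\mathcal{E}\oplus\alpha\mathcal{E}\oplus\cdots\oplus\alpha^{w}\mathcal{E}$ being direct; since any sub-collection of a linearly independent family is again independent, for every $j'\le w+1$ the sum $\mathcal{E}+\alpha\mathcal{E}+\cdots+\alpha^{j'-1}\mathcal{E}$ is also direct, whence $\dim(\mathcal{V}_{\alpha,j'}\mathcal{E}) = rj'$. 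Applying this with $j' = j+1 \le w+1$ supplies the lemma's hypothesis at each level of the induction.

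For part (b), part (a) establishes the inclusion of events
\[
\{\dim(\mathcal{V}_{\alpha,w+1}\mathcal{E}) = r(w+1)\}\subseteq\{\mathcal{F}_1 = \mathcal{E}\},
\]
so by monotonicity of probability $\Pr(\mathcal{F}_1 = \mathcal{E}) \ge \Pr(\dim(\mathcal{V}_{\alpha,w+1}\mathcal{E}) = r(w+1))$. I would then apply Proposition \ref{ProductSubspaces} with the fixed subspace $\mathcal{W} = \mathcal{V}_{\alpha,w+1}$ of dimension $w+1$: the admissibility condition $r(w+1)\le m$ is precisely the hypothesis of the theorem, and since the product of subspaces is commutative, $\mathcal{E}\mathcal{W} = \mathcal{V}_{\alpha,w+1}\mathcal{E}$. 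The proposition then yields the stated lower bound $1 - q^{r(w+1)}/(q^{m} - q^{r-1})$ directly, with no further estimation required.
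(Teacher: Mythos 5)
Your proposal is correct and follows essentially the same route as the paper: part (a) by repeatedly applying Lemma \ref{SupportRecover1} down from $j=w$ to $j=2$, and part (b) by the event inclusion $\{\dim(\mathcal{V}_{\alpha,w+1}\mathcal{E})=r(w+1)\}\subseteq\{\mathcal{F}_1=\mathcal{E}\}$ combined with Proposition \ref{ProductSubspaces}. The only difference is that you spell out the downward propagation of the direct-sum condition (via independence of sub-collections), a step the paper asserts without detail.
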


\begin{proof}
(a) Assume that \bigskip $dim(\mathcal{V}_{\alpha ,w+1}\mathcal{E)}=r(w+1)$.
Then $dim(\mathcal{V}_{\alpha ,j+1}\mathcal{E)}=r(j+1)$, for $j$ in $%
\{2,\ldots ,w\}$. Therefore, when we successively apply Lemma \ref%
{SupportRecover1} for $j=w,\ldots ,2$, we obtain $\mathcal{F}_{j-1}=\mathcal{%
V}_{\alpha ,j-1}\mathcal{E}$. Thus,$\ \mathcal{F}_{1}=\mathcal{V}_{\alpha ,1}%
\mathcal{E}=\mathcal{E}$.

(b) By the previous result, we have $\Pr (\mathcal{F}_{1}=\mathcal{E})\geq
\Pr ($\bigskip $dim(\mathcal{V}_{\alpha ,w+1}\mathcal{E)}=r(w+1))$. Thus,
from Proposition \ref{ProductSubspaces} the result follows.
\end{proof}

\section{Decoding BD-LRPC Codes \label{SDeco}}

In this section, we will combine the work of \cite%
{Renner2020low,Franch2025bounded} and the iterative intersections described
in Theorem \ref{SupportRecover2} to give a new decoding algorithm for BD-LRPC
codes.

\begin{definition}
\label{defBD-LPRC}Let $d$, $k$, $n$ be positive integers with $0<k<n$. Let $%
\mathbf{H=}\left( h_{i,j}\right) \in \mathbb{F}_{q^{m}}^{(n-k)\times n}$
such that $\left\langle h_{1,1},\ldots ,h_{\left( n-k\right)
,n}\right\rangle _{\mathbb{F}_{q}}=\mathcal{V}_{\alpha ,d}$ and the row vectors of $\mathbf{H}$ are linearly independent. A \textbf{bounded-degree low-rank
parity-check} (BD-LRPC) \textbf{code} with parameters $k$, $n$, $d$, $\alpha
$ is a code with a parity-check matrix $\mathbf{H}$.
\end{definition}

In the following $\mathbf{H}$ is as in Definition \ref{defBD-LPRC}. According to \cite{Renner2020low}, we give the following:

\begin{definition}
\label{ParityCheckMatrixExt}For $i=1,\ldots ,n$ and $j=1,\ldots ,n-k$, let $%
h_{i,j,v}\in \mathbb{F}_{q}$ such that $h_{i,j}=\sum_{v=0}^{d-1}h_{i,j,v}%
\alpha ^{v}$. Set
\begin{equation*}
\mathbf{H}_{ext}=\left(
\begin{array}{cccc}
h_{1,1,0} & h_{1,2,0} & \cdots & h_{1,n,0} \\
h_{1,1,1} & h_{1,2,1} & \cdots & h_{1,n,1} \\
\vdots & \vdots & \ddots & \vdots \\
h_{2,1,0} & h_{2,2,0} & \cdots & h_{2,n,0} \\
h_{2,1,1} & h_{2,2,1} & \cdots & h_{2,n,1} \\
\vdots & \vdots & \ddots & \vdots%
\end{array}%
\right) \in \mathbb{F}_{q}^{\left( n-k\right) d\times n}.
\end{equation*}

Then, $\mathbf{H}$ has:

$\bullet $ the \textbf{unique-decoding property} if $d\geq \frac{n}{n-k}$
and the column vectors of $\mathbf{H}_{ext}$ are linearly independent;

$\bullet $ the \textbf{maximal-row-span property} if every row of $\mathbf{H}
$ spans the entire space $\mathcal{V}_{\alpha ,d}$.
\end{definition}

The following lemma gives a method to recover the error using its support.
\begin{lemma}
\label{ErasureDecoding} Let $\mathcal{E}$ be a $\mathbb{F}_{q}$-subspace of $
\mathbb{F}_{q^{m}}$ of dimension $r$ such that $dim(\mathcal{V}_{\alpha ,d}%
\mathcal{E)}=dr$. Let $\mathbf{s}=\left( s_{1},\ldots ,s_{n-k}\right) \in
\mathbb{F}_{q^{m}}^{n-k}$ such that $s_{i}\in \mathcal{V}_{\alpha ,d}%
\mathcal{E}$, for $i=1,\ldots ,n-k$. Let $\left\{ \varepsilon _{1},\ldots
,\varepsilon _{r}\right\} $ be a basis of $\mathcal{E}$ and $s_{i,u,v}$ in $%
\mathbb{F}_{q}$ such that, for all $i\in \left\{ 1,\ldots ,n-k\right\} $,
\begin{equation}
s_{i}\mathbf{=}\sum_{1\leq u\leq r,0\leq v\leq d-1}s_{i,u,v}\varepsilon
_{u}\alpha ^{v}.  \label{SyndromeDecomposition1}
\end{equation}%
The existence of $\mathbf{e=}\left( e_{1},\ldots ,e_{n}\right) \in
\mathbb{F}_{q^{m}}^{n}$ such that $supp(\mathbf{e)}\subset \mathcal{E}$ and $\mathbf{e}$ is a solution of 
\begin{equation}
\mathbf{He}^{\intercal}=\mathbf{s}^{\intercal}  \label{SyndromeEquation1}
\end{equation}%
with $e_{i,j}$ in $\mathbb{F}_{q}$ such that, for all $i\in \left\{ 1,\ldots
,n\right\} $,
\begin{equation}
e_{i}=\sum\limits_{u=1}^{r}e_{i,u}\varepsilon _{u}.
\label{ErrorDecomposition1}
\end{equation}%
 is equivalent to the following equation
\begin{equation}
\mathbf{H}_{ext}\left(
\begin{array}{c}
e_{1,u} \\
\vdots \\
e_{n,u}%
\end{array}%
\right) =\left(
\begin{array}{c}
s_{1,u,0} \\
s_{1,u,1} \\
\vdots \\
s_{2,u,0} \\
s_{2,u,1} \\
\vdots%
\end{array}%
\right)  \label{ErasureEquation1}
\end{equation}%
$\ $for$\ u=1,\ldots ,r$, with unknowns $e_{i,u}$. Moreover, if $\mathbf{H}$
fulfills the unique-decoding property, then there is at most one $\mathbf{e}%
\in \mathbb{F}_{q^{m}}^{n}$ such that $supp(\mathbf{e)}\subset \mathcal{E}$
and $\mathbf{He}^{\intercal }=\mathbf{s}^{\intercal }$.
\end{lemma}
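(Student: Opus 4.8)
The plan is to expand the syndrome equation \eqref{SyndromeEquation1} coordinate by coordinate in the $\mathbb{F}_{q}$-generating set $\{\varepsilon_{u}\alpha^{v}:1\le u\le r,\ 0\le v\le d-1\}$ and to read off the resulting scalar identities as the matrix equation \eqref{ErasureEquation1}. First I would record the observation on which everything hinges: since $\dim(\mathcal{V}_{\alpha,d}\mathcal{E})=dr$ and the $dr$ products $\varepsilon_{u}\alpha^{v}$ generate $\mathcal{V}_{\alpha,d}\mathcal{E}$, these products must be linearly independent over $\mathbb{F}_{q}$ and hence form a basis of $\mathcal{V}_{\alpha,d}\mathcal{E}$. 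I expect this to be the only genuinely substantive point, because it is precisely the uniqueness of coordinates in this basis that legitimizes the coefficient comparison and lets me pass back and forth between the two formulations; without the hypothesis $\dim(\mathcal{V}_{\alpha,d}\mathcal{E})=dr$ the claimed equivalence would break down.

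Next I would write the $i$-th coordinate of \eqref{SyndromeEquation1}, namely $\sum_{j=1}^{n}h_{i,j}e_{j}=s_{i}$, substitute the expansions $h_{i,j}=\sum_{v}h_{i,j,v}\alpha^{v}$ and $e_{j}=\sum_{u}e_{j,u}\varepsilon_{u}$ from \eqref{ErrorDecomposition1}, and regroup to obtain
\[
s_{i}=\sum_{u=1}^{r}\sum_{v=0}^{d-1}\left(\sum_{j=1}^{n}h_{i,j,v}e_{j,u}\right)\varepsilon_{u}\alpha^{v}.
\]
Comparing this with the prescribed decomposition \eqref{SyndromeDecomposition1} of $s_{i}$ and invoking the basis property above, I obtain, for every triple $(i,u,v)$, the scalar equation $\sum_{j=1}^{n}h_{i,j,v}e_{j,u}=s_{i,u,v}$ over $\mathbb{F}_{q}$. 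Fixing $u$ and letting the pair $(i,v)$ index the rows in the very order that defines $\mathbf{H}_{ext}$ reproduces \eqref{ErasureEquation1} exactly. Conversely, any solution of \eqref{ErasureEquation1} reassembles, through \eqref{ErrorDecomposition1} and \eqref{SyndromeDecomposition1}, into an $\mathbf{e}$ with $supp(\mathbf{e})\subset\mathcal{E}$ satisfying \eqref{SyndromeEquation1}. This double implication is the asserted equivalence, the correspondence $\mathbf{e}\leftrightarrow(e_{j,u})_{j,u}$ being the bookkeeping bijection furnished by \eqref{ErrorDecomposition1}.

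Finally, for the uniqueness clause I would invoke the unique-decoding property, which by Definition \ref{ParityCheckMatrixExt} forces the columns of $\mathbf{H}_{ext}$ to be linearly independent; thus $\mathbf{H}_{ext}$ has full column rank $n$ (consistent with $(n-k)d\ge n$, guaranteed by $d\ge n/(n-k)$), and the linear system \eqref{ErasureEquation1} admits at most one solution $(e_{1,u},\ldots,e_{n,u})^{\intercal}$ for each $u$. Since the scalars $e_{j,u}$ determine $\mathbf{e}$ via $e_{j}=\sum_{u}e_{j,u}\varepsilon_{u}$, any $\mathbf{e}$ with $supp(\mathbf{e})\subset\mathcal{E}$ solving $\mathbf{H}\mathbf{e}^{\intercal}=\mathbf{s}^{\intercal}$ is uniquely pinned down, which completes the argument.
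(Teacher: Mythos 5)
Your proof is correct. Note that the paper itself gives no argument for this lemma --- it simply cites \cite[Lemma 4]{Renner2020low} --- so you have supplied the omitted details, and your argument is the standard one that the cited reference uses: you correctly isolate the one substantive point, namely that $\dim(\mathcal{V}_{\alpha,d}\mathcal{E})=dr$ forces the $dr$ products $\varepsilon_{u}\alpha^{v}$ to be an $\mathbb{F}_{q}$-basis of $\mathcal{V}_{\alpha,d}\mathcal{E}$, which is what legitimizes the coefficient comparison yielding \eqref{ErasureEquation1}, and the uniqueness clause follows from the full column rank of $\mathbf{H}_{ext}$ exactly as you say. No gaps.
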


\begin{proof}
See, \cite[Lemma 4]{Renner2020low}.
\end{proof}

We will now describe a method to recover the error support using the
syndrome. This method is based on the expansion of the syndrome support
introduced in \cite{Franch2025bounded}. Let $\mathbf{e}$ $\in \mathbb{F}%
_{q^{m}}^{n}$ be an error of rank $r$ and $\mathcal{E}$ its support. Let the
syndrome be $\mathbf{s}=\mathbf{eH}^{\intercal }$ and $\mathcal{S}$ its
support. Let $t$ be a positive integer. The expansion of the syndrome
support is the subspace $\mathcal{V}_{\alpha ,t}\mathcal{S}$. As $\mathcal{S}%
\subset \mathcal{V}_{\alpha ,d}\mathcal{E}$, we have $\mathcal{V}_{\alpha ,t}%
\mathcal{S}\subset \mathcal{V}_{\alpha ,d+t-1}\mathcal{E}$. According to
Theorem \ref{SupportRecover2} if $\mathcal{V}_{\alpha ,t}\mathcal{S}=%
\mathcal{V}_{\alpha ,d+t-1}\mathcal{E}$ and $dim(\mathcal{V}_{\alpha ,d+t}%
\mathcal{E})=(d+t)r$ then it is possible to recover $\mathcal{E}$ using $%
\mathcal{S}$. So, Theorem \ref{SupportRecover2} and Lemma \ref%
{ErasureDecoding} allow to give Algorithm \ref{DecodingAlgorithm1}.

{%
\begin{algorithm}[]
\label{DecodingAlgorithm1}
\caption{%
Decoding Algorithm for BD-LRPC Codes}
\DontPrintSemicolon
\KwIn{

$%
\bullet $ Parity-check matrix $\mathbf{H}$ (as in Definition \ref
{defBD-LPRC})

$\bullet $ The positive integer $t$.

$\bullet$ $\mathbf{y}=\mathbf{c}+\mathbf{e}
$, such that $\mathbf{c}$ is in
the BD-LRPC code $\mathcal{C}$

\ \ \ given
by $\mathbf{H}$, and $\mathbf{e} \in \mathbb{F}_{q^{m}}^{n}$.
}
\KwOut{Codeword $\mathbf{c}%
^{\prime }$ of $\mathcal{C}$ or
"decoding failure".
}

$\mathbf{s}%
=\left( s_{1},\ldots ,s_{n-k}\right) \longleftarrow \mathbf{yH}
^{\intercal }$

\If{$\mathbf{s=0}$}{
    \Return{$\mathbf{y}$}
}

$\mathcal{S}
\longleftarrow \left \langle s_{1},\ldots
,s_{n-k}\right \rangle _{\mathbb{F}_{q}}$

$\mathcal{F}_{d+t-1} \longleftarrow \mathcal{V}_{\alpha ,t}\mathcal{S}$

\For{$j=d+t-1,\ldots ,2 $}{
$\mathcal{F}_{j-1}\longleftarrow (\alpha ^{-1}\mathcal{F}_{j})\cap \mathcal{F}_{j}$
}

$\mathcal{E} \longleftarrow \mathcal{F}_{1}$

\If{$\mathcal{E}=\{0\} $}{
    \Return{"decoding failure"}
}%

$r\longleftarrow \dim( \mathcal{E}) $%

Choose a basis $\left\{ \varepsilon
_{1},\ldots ,\varepsilon _{r}\right\}$ of
$\mathcal{E}$.

Solve Equation (\ref{SyndromeDecomposition1}) with unknowns $s_{i,u,v}$.

\eIf{ Equation (\ref{SyndromeDecomposition1}) has no solution}{
\Return{"decoding failure"}}{

Use a solution of Equation (\ref{SyndromeDecomposition1}) to solve
\newline Equation (\ref{ErasureEquation1}) with unknowns $e_{i,u}$.}

\eIf{
Equation (\ref{ErasureEquation1}) has no solution}{
    \Return{"decoding failure"}}{
Use a solution of Equation (\ref{ErasureEquation1}) to compute $\mathbf{e}$
as in (\ref{%
ErrorDecomposition1}).

\Return{$\mathbf{y-e}$}
}

\end{algorithm}}

\begin{theorem}
\label{CorrectnessOfAlgorithm} Assume that $\mathbf{H}$ fulfills the
unique-decoding property. Let $\mathcal{E}$ be the support of the error
vector $\mathbf{e}\in \mathbb{F}_{q^{m}}^{n}$ of rank $r$. Then, Algorithm %
\ref{DecodingAlgorithm1} with input $\mathbf{y=c+e}$ returns $\mathbf{c}$ if
the following two conditions are fulfilled:

(i) $\mathcal{V}_{\alpha ,t}\mathcal{S}=\mathcal{V}_{\alpha ,d+t-1}\mathcal{E%
}$;

(ii) $dim(\mathcal{V}_{\alpha ,d+t}\mathcal{E})=(d+t)r$.
\end{theorem}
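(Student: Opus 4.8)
The plan is to walk through Algorithm \ref{DecodingAlgorithm1} line by line and confirm that, under (i) and (ii), every conditional branch is taken correctly and the final output equals $\mathbf{c}$. It is convenient to set $w=d+t-1$ throughout, and to dispose of the trivial case $r=0$ first: there $\mathbf{e}=\mathbf{0}$, hence $\mathbf{s}=\mathbf{0}$, and the algorithm returns $\mathbf{y}=\mathbf{c}$ at the first test, which is correct. So I assume $r\geq 1$ from now on.

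First I would analyze the support-recovery loop. Hypothesis (i) gives $\mathcal{F}_{d+t-1}=\mathcal{V}_{\alpha,t}\mathcal{S}=\mathcal{V}_{\alpha,d+t-1}\mathcal{E}=\mathcal{V}_{\alpha,w}\mathcal{E}$, which is exactly the initial value $\mathcal{F}_w$ of Theorem \ref{SupportRecover2}. The loop of the algorithm, $\mathcal{F}_{j-1}=(\alpha^{-1}\mathcal{F}_j)\cap\mathcal{F}_j$ for $j=w,\ldots,2$, is verbatim the recursion of that theorem. Hypothesis (ii) reads $\dim(\mathcal{V}_{\alpha,w+1}\mathcal{E})=r(w+1)$, which is precisely the premise of Theorem \ref{SupportRecover2}(a); since every subspace of $\mathbb{F}_{q^m}$ has dimension at most $m$, it also supplies the constraint $r(w+1)\leq m$. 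Hence Theorem \ref{SupportRecover2}(a) applies and yields $\mathcal{F}_1=\mathcal{E}$, so the algorithm recovers the true support exactly. As $r\geq 1$ gives $\mathcal{E}\neq\{0\}$, the algorithm does not abort at the zero-support test, and the rank $r$ it computes agrees with the true rank.

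Next I would check that the erasure-decoding phase is both applicable and consistent. Since $\mathcal{V}_{\alpha,d}\mathcal{E}\subseteq\mathcal{V}_{\alpha,d+t}\mathcal{E}$, hypothesis (ii) forces $\dim(\mathcal{V}_{\alpha,d}\mathcal{E})=dr$, which is the standing hypothesis of Lemma \ref{ErasureDecoding}. Each syndrome coordinate equals $s_i=\sum_{j}e_j h_{i,j}$ with $e_j\in\mathcal{E}$ and $h_{i,j}\in\mathcal{V}_{\alpha,d}$, so $s_i\in\mathcal{V}_{\alpha,d}\mathcal{E}$; as $\{\varepsilon_u\alpha^v\}$ is then a basis of $\mathcal{V}_{\alpha,d}\mathcal{E}$, Equation (\ref{SyndromeDecomposition1}) has a (unique) solution and the algorithm does not fail there.

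Finally, I would invoke the equivalence and the uniqueness clause of Lemma \ref{ErasureDecoding}. The true error $\mathbf{e}$ satisfies $supp(\mathbf{e})\subset\mathcal{E}$ and $\mathbf{H}\mathbf{e}^{\intercal}=\mathbf{s}^{\intercal}$, so by the lemma Equation (\ref{ErasureEquation1}) is solvable and the algorithm again does not abort. The unique-decoding property guarantees, through the last clause of the lemma, that at most one vector with support in $\mathcal{E}$ solves $\mathbf{H}\mathbf{e}^{\intercal}=\mathbf{s}^{\intercal}$; hence the vector reconstructed from any solution of (\ref{ErasureEquation1}) via (\ref{ErrorDecomposition1}) must coincide with $\mathbf{e}$, and the algorithm outputs $\mathbf{y}-\mathbf{e}=\mathbf{c}$. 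The only genuinely delicate point is the first step: verifying that the algorithm's loop is literally the recursion of Theorem \ref{SupportRecover2} with $w=d+t-1$, and that hypotheses (i) and (ii) supply exactly its two premises. Everything downstream is bookkeeping together with the uniqueness half of Lemma \ref{ErasureDecoding}.
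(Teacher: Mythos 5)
Your proposal is correct and follows essentially the same route as the paper's proof: use hypothesis (i) to identify the algorithm's loop with the recursion of Theorem \ref{SupportRecover2}, use hypothesis (ii) as the premise of part (a) of that theorem to get $\mathcal{F}_1=\mathcal{E}$, deduce $\dim(\mathcal{V}_{\alpha,d}\mathcal{E})=dr$ from (ii), and conclude via Lemma \ref{ErasureDecoding} and the unique-decoding property. Your write-up merely makes explicit a few steps the paper leaves implicit (the $r=0$ case, the membership $s_i\in\mathcal{V}_{\alpha,d}\mathcal{E}$, and the constraint $r(d+t)\leq m$).
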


\begin{proof}
Assume that the two conditions are fulfilled. Then, \ given $\mathcal{F}_{1}$
as in Algorithm \ref{DecodingAlgorithm1}, we have $\mathcal{F}_{1}=\mathcal{E%
}$ by Theorem \ref{SupportRecover2}. As $dim(\mathcal{V}_{\alpha ,d+t}\mathcal{%
E})=(d+t)r$, then $dim(\mathcal{V}_{\alpha ,d}\mathcal{E})=dr$. Thus, by
Lemma \ref{ErasureDecoding}, the result follows.
\end{proof}

\begin{remark}
In general, the success of the decoding algorithm of LRPC codes is based on
three conditions \cite{Burle2023upper}. Theorem \ref{CorrectnessOfAlgorithm}
simply gives two conditions. This is because the condition used in the
second phase is also used in the third phase. More precisely, if
condition (ii) of Theorem \ref{CorrectnessOfAlgorithm} is true, then we will
simultaneously have success in the second and third phases of the decoding
algorithm. This is an advantage we have by combining the successive
intersection method of Theorem \ref{SupportRecover2} and the work of \cite%
{Renner2020low}. Note that a similar approach was also used in \cite%
{Kamche2024low}.
\end{remark}

\section{Probability of the Decoding Algorithm \label{SProb}}

\subsection{Probability of Successful Expansions}

In this subsection, we will use the work of \cite%
{Renner2020low,Franch2025bounded} to study the probability of the expansion
of the syndrome support. As in \cite{Renner2020low}, we will first transform
the initial syndrome into another syndrome such that the coordinates of this
new syndrome are uniformly distributed.  Let $\mathbf{e}\in \mathbb{F}%
_{q^{m}}^{n}$ be a random error of rank $r$ and $\mathcal{E=}supp(\mathbf{e)}
$. Let $\mathbf{s=eH}^{\intercal }$ be the syndrome and $\mathcal{S}=supp(%
\mathbf{s)}$. Let $t$ be an integer such that $r(d+t-1)\leq m$. Assume that $%
dim(\mathcal{V}_{\alpha ,d+t-1}\mathcal{E})=(d+t-1)r$ and that $\mathbf{H}$
has the maximal-row-span property. Let $\mathbf{\varepsilon
}=(\varepsilon _{1},\ldots ,\varepsilon _{r})$ be a basis of $\mathcal{E}$.
Then $\mathbf{e}=\varepsilon \mathbf{E}$ where $\mathbf{E}$ is an $r\times n$
matrix over $\mathbb{F}_{q}$ of rank $r$. Let $\mathbf{E}^{\prime }$ be a
random $r\times n$ matrix over $\mathbb{F}_{q}$. Set $\mathbf{e}^{\prime }%
\mathbf{=\varepsilon E}^{\prime }$, $\mathbf{s}^{\prime }\mathbf{=e}^{\prime
}\mathbf{H}^{\intercal }$, $\mathcal{E}^{\prime }=supp(\mathbf{e}^{\prime }%
\mathbf{)}$, and $\mathcal{S}^{\prime }\mathcal{=}supp(\mathbf{s}^{\prime }%
\mathbf{)}$.

\begin{proposition}
\label{Reduction1}With the above notations, we have
\begin{equation*}
\Pr (\mathcal{V}_{\alpha ,t}\mathcal{S}^{\prime }=\mathcal{V}_{\alpha ,d+t-1}%
\mathcal{E)\leq }\Pr (\mathcal{V}_{\alpha ,t}\mathcal{S}=\mathcal{V}_{\alpha
,d+t-1}\mathcal{E)}.
\end{equation*}
\end{proposition}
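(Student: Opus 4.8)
The plan is to exploit the one structural difference between the two models: by construction $\mathbf{E}$ has full rank $r$, whereas $\mathbf{E}'$ is an unconstrained uniform random $r\times n$ matrix over $\mathbb{F}_q$. I would condition on the event $A=\{rank(\mathbf{E}')=r\}$ and treat the full-rank and rank-deficient cases separately, keeping the support $\mathcal{E}$ (and a basis $\mathbf{\varepsilon}$) fixed so that all probabilities are taken over the choice of $\mathbf{E}$ or $\mathbf{E}'$ alone; averaging over $\mathcal{E}$ at the end recovers the unconditional statement.

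First I would show that the rank-deficient branch contributes nothing. On $A^{c}$ the error $\mathbf{e}'=\mathbf{\varepsilon}\mathbf{E}'$ has support $\mathcal{E}'=supp(\mathbf{e}')$ with $\dim\mathcal{E}'=rank(\mathbf{E}')<r$. Since every entry of $\mathbf{H}$ lies in $\mathcal{V}_{\alpha,d}$, each coordinate of $\mathbf{s}'=\mathbf{e}'\mathbf{H}^{\intercal}$ lies in $\mathcal{V}_{\alpha,d}\mathcal{E}'$, hence $\mathcal{S}'\subset\mathcal{V}_{\alpha,d}\mathcal{E}'$. Using the elementary identity $\mathcal{V}_{\alpha,t}\mathcal{V}_{\alpha,d}=\mathcal{V}_{\alpha,d+t-1}$ (immediate from $\langle 1,\ldots,\alpha^{t-1}\rangle\langle 1,\ldots,\alpha^{d-1}\rangle=\langle 1,\ldots,\alpha^{d+t-2}\rangle$) I obtain
\[
\mathcal{V}_{\alpha,t}\mathcal{S}'\subset\mathcal{V}_{\alpha,d+t-1}\mathcal{E}',\qquad\dim\left(\mathcal{V}_{\alpha,d+t-1}\mathcal{E}'\right)\leq(d+t-1)\dim\mathcal{E}'<(d+t-1)r=\dim\left(\mathcal{V}_{\alpha,d+t-1}\mathcal{E}\right).
\]
Thus on $A^{c}$ the subspace $\mathcal{V}_{\alpha,t}\mathcal{S}'$ is strictly smaller than $\mathcal{V}_{\alpha,d+t-1}\mathcal{E}$, so the target equality cannot hold and this case contributes $0$.

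Next I would handle the full-rank branch. Conditioned on $A$, the matrix $\mathbf{E}'$ is uniform over the full-rank $r\times n$ matrices over $\mathbb{F}_q$, which is exactly the law of $\mathbf{E}$; since $\mathbf{e}'$ and $\mathbf{s}'$ are the same fixed functions of $\mathbf{E}'$ that $\mathbf{e}$ and $\mathbf{s}$ are of $\mathbf{E}$, the event $\mathcal{V}_{\alpha,t}\mathcal{S}'=\mathcal{V}_{\alpha,d+t-1}\mathcal{E}$ has, conditionally on $A$, the same probability as $\mathcal{V}_{\alpha,t}\mathcal{S}=\mathcal{V}_{\alpha,d+t-1}\mathcal{E}$. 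The law of total probability then gives
\[
\Pr\left(\mathcal{V}_{\alpha,t}\mathcal{S}'=\mathcal{V}_{\alpha,d+t-1}\mathcal{E}\right)=\Pr(A)\cdot\Pr\left(\mathcal{V}_{\alpha,t}\mathcal{S}=\mathcal{V}_{\alpha,d+t-1}\mathcal{E}\right),
\]
and since $\Pr(A)\leq 1$ the asserted inequality follows.

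The step I expect to require the most care is the distributional identification in the full-rank branch: one must verify that the stated model — $\mathbf{e}$ a uniform rank-$r$ error with the prescribed support $\mathcal{E}$ — indeed means $\mathbf{E}$ is uniform over full-rank matrices, and that this is invariant under the choice of basis $\mathbf{\varepsilon}$, so that conditioning $\mathbf{E}'$ on $A$ reproduces it exactly. The dimension count in the rank-deficient branch is the other delicate point, since it is precisely what kills that branch and converts the total-probability decomposition into the desired one-sided bound.
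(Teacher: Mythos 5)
Your proof is correct and follows essentially the same route as the paper: both identify $rank(\mathbf{E}')=r$ as a necessary condition for $\mathcal{V}_{\alpha,t}\mathcal{S}'=\mathcal{V}_{\alpha,d+t-1}\mathcal{E}$ (via the dimension hypothesis $\dim(\mathcal{V}_{\alpha,d+t-1}\mathcal{E})=(d+t-1)r$) and then invoke the fact, from Renner et al., that conditioning $\mathbf{E}'$ on full rank reproduces the law of $\mathbf{E}$. Your version is slightly more explicit (the dimension count killing the rank-deficient branch, and the averaging over $\mathcal{E}$) and yields the exact identity $\Pr(\mathcal{V}_{\alpha,t}\mathcal{S}'=\mathcal{V}_{\alpha,d+t-1}\mathcal{E})=\Pr(rank(\mathbf{E}')=r)\cdot\Pr(\mathcal{V}_{\alpha,t}\mathcal{S}=\mathcal{V}_{\alpha,d+t-1}\mathcal{E})$, but this is a cosmetic strengthening of the same argument.
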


\begin{proof}
We have $\mathcal{E}^{\prime }\subset \mathcal{E}$ with equality if and only
if $rank(\mathbf{E}^{\prime })=r$. We also have $\mathcal{V}_{\alpha ,t}%
\mathcal{S}\subset \mathcal{V}_{\alpha ,d+t-1}\mathcal{E}$ and $\mathcal{V}%
_{\alpha ,t}\mathcal{S}^{\prime }\subset \mathcal{V}_{\alpha ,d+t-1}\mathcal{%
E}^{\prime }\subset \mathcal{V}_{\alpha ,d+t-1}\mathcal{E}$. As $\dim(%
\mathcal{V}_{\alpha ,d+t-1}\mathcal{E})=(d+t-1)r$, then a necessary
condition for $\mathcal{V}_{\alpha ,t}\mathcal{S}^{\prime }=\mathcal{V}%
_{\alpha ,d+t-1}\mathcal{E}$ is that $rank(\mathbf{E}^{\prime })=r$.
Therefore,
\begin{equation*}
\Pr (\mathcal{V}_{\alpha ,t}\mathcal{S}^{\prime }=\mathcal{V}_{\alpha ,d+t-1}%
\mathcal{E)}\leq \Pr (\mathcal{V}_{\alpha ,t}\mathcal{S}^{\prime }=\mathcal{V
}_{\alpha ,d+t-1}\mathcal{E}/rank(\mathbf{E}^{\prime })=r\mathcal{)}.
\end{equation*}%
As specified in the proof of \cite[Theorem 11.]{Renner2020low}, if $rank(\mathbf{E}^{\prime })=r$ then $\mathbf{e}^{\prime }$ can
also be considered as a random error of rank $r$. Hence,
\begin{equation*}
\Pr (\mathcal{V}_{\alpha ,t}\mathcal{S}^{\prime }=\mathcal{V}_{\alpha ,d+t-1}
\mathcal{E}/rank(\mathbf{E}^{\prime })=r\mathcal{)=}\Pr (\mathcal{V}_{\alpha
,t}\mathcal{S}=\mathcal{V}_{\alpha ,d+t-1}\mathcal{E)}.
\end{equation*}%
Thus,%
\begin{equation*}
\Pr (\mathcal{V}_{\alpha ,t}\mathcal{S}^{\prime }=\mathcal{V}_{\alpha ,d+t-1}%
\mathcal{E)\leq }\Pr (\mathcal{V}_{\alpha ,t}\mathcal{S}=\mathcal{V}_{\alpha
,d+t-1}\mathcal{E)}.\newline
\end{equation*}%
\hfill
\end{proof}

We will now study the probability that $\mathcal{V}_{\alpha ,t}\mathcal{S}%
^{\prime }=\mathcal{V}_{\alpha ,d+t-1}\mathcal{E}$. As $\mathbf{e}^{\prime }%
\mathbf{=\varepsilon E}^{\prime }$, then
\begin{equation*}
\mathbf{s}^{\prime \intercal }=\mathbf{X}_{1}\varepsilon ^{\intercal
}+\alpha \mathbf{X}_{2}\varepsilon ^{\intercal }+\cdots +\alpha ^{d-1}%
\mathbf{X}_{d}\varepsilon ^{\intercal }
\end{equation*}%
where $\mathbf{X}_{1},\ldots ,\mathbf{X}_{d}$ are matrices over $\mathbb{F}%
_{q}$ of size $(n-k)\times r$. \ Since $\mathbf{E}^{\prime }$ is a random
matrix and $\mathbf{H}$ has the maximal-row-span property, one can prove as
in \cite[Theorem 11.]{Renner2020low} that the entries of the matrices $%
\mathbf{X}_{1},\ldots ,\mathbf{X}_{d}$ are uniformly distributed on $\mathbb{%
F}_{q}$. As in \cite{Franch2025bounded}, we have
\begin{equation*}
\left(
\begin{array}{c}
\mathbf{s}^{\prime \intercal } \\
\alpha \mathbf{s}^{\prime \intercal } \\
\vdots \\
\alpha ^{t-1}\mathbf{s}^{\prime \intercal }%
\end{array}%
\right) =\mathbf{M}_{t}\left(
\begin{array}{c}
\mathbf{\varepsilon }^{\intercal } \\
\alpha \mathbf{\varepsilon }^{\intercal } \\
\vdots \\
\alpha ^{d+t-2}\mathbf{\varepsilon }^{\intercal }%
\end{array}%
\right)
\end{equation*}%
where%
\begin{equation}
\mathbf{M}_{t}=\left(
\begin{array}{ccccccc}
\mathbf{X}_{1} & \mathbf{X}_{2} & \cdots & \mathbf{X}_{d} & \mathbf{0} &
\cdots & \mathbf{0} \\
\mathbf{0} & \mathbf{X}_{1} & \mathbf{X}_{2} & \cdots & \mathbf{X}_{d} &
\cdots & \mathbf{0} \\
\vdots & \ddots & \ddots & \ddots &  & \ddots & \vdots \\
\mathbf{0} & \cdots & \mathbf{0} & \mathbf{X}_{1} & \mathbf{X}_{2} & \cdots &
\mathbf{X}_{d}%
\end{array}%
\right) .  \label{Mat_t}
\end{equation}

\begin{proposition}
\label{Reduction2}With the above notations,
\begin{equation*}
\Pr (\mathcal{V}_{\alpha ,t}\mathcal{S}^{\prime }=\mathcal{V}_{\alpha ,d+t-1}%
\mathcal{E)=}\Pr (rank(\mathbf{M}_{t})=r(d+t-1)).
\end{equation*}
\end{proposition}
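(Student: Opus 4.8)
The plan is to show that the two events whose probabilities appear on the two sides of the claimed identity are in fact the \emph{same} event, outcome by outcome over the random choice of $\mathbf{X}_1,\ldots,\mathbf{X}_d$ (equivalently, of $\mathbf{E}'$); the equality of probabilities then follows trivially. The whole argument amounts to reading off the dimension of $\mathcal{V}_{\alpha ,t}\mathcal{S}'$ directly from $\mathbf{M}_t$, so no probabilistic estimate is needed.

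First I would observe that, by definition, $\mathcal{V}_{\alpha ,t}\mathcal{S}'$ is the $\mathbb{F}_q$-span of the elements $\alpha^i s'_j$ for $0\le i\le t-1$ and $1\le j\le n-k$, which are precisely the entries of the column vector on the left-hand side of the displayed matrix identity. Dually, the entries of the column vector on the right-hand side are the elements $\alpha^\ell \varepsilon_u$ for $0\le \ell\le d+t-2$ and $1\le u\le r$, which span $\mathcal{V}_{\alpha ,d+t-1}\mathcal{E}$. Under the standing assumption $\dim(\mathcal{V}_{\alpha ,d+t-1}\mathcal{E})=(d+t-1)r$, these $(d+t-1)r$ elements are $\mathbb{F}_q$-linearly independent and hence form a basis of $\mathcal{V}_{\alpha ,d+t-1}\mathcal{E}$.

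The key step is then the following. Fixing this basis identifies $\mathcal{V}_{\alpha ,d+t-1}\mathcal{E}$ with $\mathbb{F}_q^{(d+t-1)r}$ as $\mathbb{F}_q$-vector spaces, and under this identification each entry of the left-hand vector corresponds to the associated row of $\mathbf{M}_t$. Hence the subspace generated by the left-hand entries is carried isomorphically onto the row space of $\mathbf{M}_t$, so that $\dim(\mathcal{V}_{\alpha ,t}\mathcal{S}')=rank(\mathbf{M}_t)$. Because $\mathcal{V}_{\alpha ,t}\mathcal{S}'\subset \mathcal{V}_{\alpha ,d+t-1}\mathcal{E}$ (already noted in the proof of Proposition \ref{Reduction1}), this inclusion is an equality exactly when the two dimensions agree, that is, when $rank(\mathbf{M}_t)=(d+t-1)r=r(d+t-1)$. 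Thus the event $\{\mathcal{V}_{\alpha ,t}\mathcal{S}'=\mathcal{V}_{\alpha ,d+t-1}\mathcal{E}\}$ coincides with $\{rank(\mathbf{M}_t)=r(d+t-1)\}$, and the two probabilities are equal.

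The only genuinely delicate point, which I would state explicitly, is the passage from ``span of the left-hand entries'' to ``row space of $\mathbf{M}_t$'': this relies on the linear independence of the $\alpha^\ell \varepsilon_u$, guaranteed by the hypothesis $\dim(\mathcal{V}_{\alpha ,d+t-1}\mathcal{E})=(d+t-1)r$, so that the coordinate map is a bijection and dimensions are preserved. Everything else is bookkeeping, and since the set identity of events holds for every realization of $\mathbf{M}_t$, the stated probability identity is immediate.
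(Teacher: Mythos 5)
Your proof is correct and follows essentially the same route as the paper's own (much terser) argument: both identify $\mathcal{V}_{\alpha,t}\mathcal{S}'$ with the row space of $\mathbf{M}_t$ via the basis $\{\alpha^\ell\varepsilon_u\}$ furnished by the standing assumption $\dim(\mathcal{V}_{\alpha,d+t-1}\mathcal{E})=(d+t-1)r$, so that the two events coincide realization by realization. Your version merely makes explicit the coordinate-identification step that the paper leaves implicit.
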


\begin{proof} The matrix $\mathbf{M}_{t}$ is a matrix of size $t(n-k)\times r(d+t-1)$.
The subspace $\mathcal{V}_{\alpha ,t}\mathcal{S}^{\prime }$ is generated by
the entries of the matrices $\mathbf{s}^{\prime },\ \alpha \mathbf{s}%
^{\prime },\ldots ,\alpha ^{t-1}\mathbf{s}^{\prime }$. Therefore, $\mathcal{V%
}_{\alpha ,t}\mathcal{S}^{\prime }=\mathcal{V}_{\alpha ,d+t-1}\mathcal{E}$
if and only if $rank(\mathbf{M}_{t})=r(d+t-1)$.
\end{proof}

\begin{notation}
The probability of the above proposition will be denoted by
\begin{equation}
P_{t}:=\Pr (rank(\mathbf{M}_{t})=r(d+t-1)).  \label{Proba_t}
\end{equation}
\end{notation}

Some properties of the probability $P_t$ are given in Section \ref{ProbaPt}. 

\subsection{Overall Probability}

We will now give the main result on the success probability of the decoding
algorithm.

\begin{theorem}
\label{ProbaFinal}Let $\mathbf{H}$ be as in Definition \ref{defBD-LPRC}. Assume
that $\mathbf{H}$ has the unique-decoding property and the maximal-row-span
property. Let $r$ and $t$ be two positive integers such that $r(d+t-1)\leq
t(n-k)$ and $(d+t)r\leq m$.  Let $\mathbf{e}\in \mathbb{F}%
_{q^{m}}^{n}$ be a random error of rank $r$ and $\mathbf{c}$ be a codeword
of the BD-LRPC code with the parity-check matrix $\mathbf{H}$. Then,
Algorithm \ref{DecodingAlgorithm1} with input $\mathbf{y=c+e}$ returns $%
\mathbf{c}$ with a success probability at least:%
\begin{equation*}
\Pr (success)\geq \left( 1-\frac{q^{r(t+d)}}{q^{m}-q^{r-1}}\right) P_{t}
\end{equation*}%
where $P_{t}$ is defined in (\ref{Proba_t}) and some lower bounds of $P_{t}$
are given in Section \ref{ProbaPt}.
\end{theorem}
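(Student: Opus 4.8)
The plan is to derive the bound by combining the sufficient conditions of Theorem~\ref{CorrectnessOfAlgorithm} with a conditioning argument on the error support, using Propositions~\ref{Reduction1} and~\ref{Reduction2} to control the first phase and Proposition~\ref{ProductSubspaces} to control the second and third phases. First, by Theorem~\ref{CorrectnessOfAlgorithm} the algorithm returns $\mathbf{c}$ whenever both (i) $\mathcal{V}_{\alpha ,t}\mathcal{S}=\mathcal{V}_{\alpha ,d+t-1}\mathcal{E}$ and (ii) $\dim(\mathcal{V}_{\alpha ,d+t}\mathcal{E})=(d+t)r$ hold, so it suffices to lower bound the probability that (i) and (ii) occur simultaneously. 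Write $A$ for event (ii) and $B$ for event (i). Two structural observations drive the argument: event $A$ is measurable with respect to the support $\mathcal{E}$ alone (it does not involve the particular coordinates of $\mathbf{e}$), and $A$ forces $\dim(\mathcal{V}_{\alpha ,d+t-1}\mathcal{E})=(d+t-1)r$, since the $d+t$ subspaces $\mathcal{E},\alpha\mathcal{E},\dots,\alpha^{d+t-1}\mathcal{E}$, each of dimension $r$, attaining total dimension $(d+t)r$ must be in direct sum, whence any sub-collection is direct as well.

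The second step is to condition on $\mathcal{E}$. For a fixed $\mathcal{E}$ on which $A$ holds, the hypothesis $\dim(\mathcal{V}_{\alpha ,d+t-1}\mathcal{E})=(d+t-1)r$ required by Propositions~\ref{Reduction1} and~\ref{Reduction2} is satisfied, and since $\mathbf{H}$ has the maximal-row-span property, chaining the two propositions gives $\Pr(B\mid\mathcal{E})\ge \Pr(\mathcal{V}_{\alpha ,t}\mathcal{S}'=\mathcal{V}_{\alpha ,d+t-1}\mathcal{E})=P_{t}$. Crucially, $P_{t}=\Pr(rank(\mathbf{M}_{t})=r(d+t-1))$ depends only on the uniformly distributed blocks $\mathbf{X}_{1},\ldots ,\mathbf{X}_{d}$ and hence is the same lower bound for every admissible $\mathcal{E}$. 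Because $A$ is $\mathcal{E}$-measurable, the law of total probability then factorizes the bound:
\[
\Pr(A\cap B)=\mathbb{E}_{\mathcal{E}}\!\left[\mathbf{1}_{A}\,\Pr(B\mid\mathcal{E})\right]\ge P_{t}\,\mathbb{E}_{\mathcal{E}}\!\left[\mathbf{1}_{A}\right]=P_{t}\,\Pr(A).
\]

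The third step is to bound $\Pr(A)$. Since $\mathbf{e}$ is a uniformly random error of rank $r$, its support $\mathcal{E}$ is uniformly distributed over the $r$-dimensional $\mathbb{F}_{q}$-subspaces of $\mathbb{F}_{q^{m}}$; moreover $(d+t)r\le m$ and $\deg\alpha=m$ imply $\dim\mathcal{V}_{\alpha ,d+t}=d+t$, so Proposition~\ref{ProductSubspaces} applied with $\mathcal{W}=\mathcal{V}_{\alpha ,d+t}$ and $w=d+t$ yields $\Pr(A)\ge 1-q^{r(d+t)}/(q^{m}-q^{r-1})$. Substituting this into the previous display and using $\Pr(success)\ge\Pr(A\cap B)$ produces the claimed estimate, after noting $r(d+t)=r(t+d)$.

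The main obstacle is the conditioning bookkeeping rather than any single inequality: one must check that $\Pr(B\mid\mathcal{E})\ge P_{t}$ holds uniformly over all $\mathcal{E}$ in $A$ — which relies both on $A$ implying the direct-sum hypothesis of Propositions~\ref{Reduction1} and~\ref{Reduction2} and on $P_{t}$ being independent of the particular $\mathcal{E}$ — and that $A$ cleanly factors out because it is $\mathcal{E}$-measurable. The remaining hypothesis $r(d+t-1)\le t(n-k)$ plays no role in the inequalities above but is exactly what allows the $t(n-k)\times r(d+t-1)$ matrix $\mathbf{M}_{t}$ to attain full column rank; it is the condition ensuring $P_{t}$ can be positive, so that the stated bound is not vacuous.
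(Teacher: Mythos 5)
Your proof is correct and follows essentially the same route as the paper: invoke Theorem~\ref{CorrectnessOfAlgorithm}, bound the support condition $\dim(\mathcal{V}_{\alpha,d+t}\mathcal{E})=(d+t)r$ via Proposition~\ref{ProductSubspaces}, and bound the expansion condition via Propositions~\ref{Reduction1} and~\ref{Reduction2}. Your conditioning on $\mathcal{E}$ merely makes explicit the factorization $\Pr(A\cap B)\ge P_t\,\Pr(A)$ that the paper's terse proof leaves implicit, and is a welcome clarification rather than a different argument.
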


\begin{proof}
According to Theorem \ref{CorrectnessOfAlgorithm}, if $\mathcal{V}_{\alpha
,t}\mathcal{S}=\mathcal{V}_{\alpha ,d+t-1}\mathcal{E}$ and $dim(\mathcal{V}%
_{\alpha ,d+t}\mathcal{E})=(d+t)r$ then Algorithm \ref{DecodingAlgorithm1}
returns $\mathbf{c}$. Therefore, the success probability is lower bound by
the probability that $\mathcal{V}_{\alpha ,t}\mathcal{S}=\mathcal{V}_{\alpha
,d+t-1}\mathcal{E}$ and $dim(\mathcal{V}_{\alpha ,d+t}\mathcal{E})=(d+t)r$.
From Proposition \ref{ProductSubspaces}, the probability that $dim(\mathcal{V%
}_{\alpha ,d+t}\mathcal{E})=(d+t)r$ is less than $1-\frac{q^{r(t+d)}}{
q^{m}-q^{r-1}}$. From Propositions \ref{Reduction1} and \ref{Reduction2}, the probability that $\mathcal{V}_{\alpha ,t}\mathcal{S}=\mathcal{V}_{\alpha
,d+t-1}\mathcal{E}$ is less than $P_{t}$. Thus, the result follows.
\end{proof}

\begin{remark}
By Proposition \ref{ProbaIneq}, the probability $P_{t}$ increases with $t$. Thus, the parameter $t$ improves the probability of the first phase of the decoding algorithm. However, it affects the probability of the second phase, which is $1-\frac{q^{r(d+t)}}{q^{m}-q^{r-1}}$. To increase this probability, it is necessary to choose a small value of $t$. In \cite{Franch2025bounded}, Franch and Li observed that when $t=\left\lceil r(d-1)/u\right\rceil +1$, then $P_{t}$ is close to its maximum value, which is $P_{r(d-1)}$, where $u=n-k-r$. Thus, in practice, to maximize the probability of the first and second phases of the decoding algorithm, we can choose $t=\left\lceil r(d-1)/u\right\rceil +1$. Recall that in most cases we have $r(d-1) \leq u$, that is, $rd \leq n-k$. This is, for example, the condition used in classical LRPC codes. When $t=\left\lceil r(d-1)/u\right\rceil +1$ and $r(d-1) \leq u$ then $t=2$. As we will see in Remark \ref{Compa}, in the simple case where $t=2$, BD-LRPC codes have a better decoding probability than LRPC codes as the value of $d$ increases. 
\end{remark}

\subsection{Comparison of Decoding Algorithms}
We will now compare the theoretical upper bound of the failure probability of Algorithm \ref{DecodingAlgorithm1} with that of BD-LRPC codes given in \cite{Franch2025bounded} and that of LRPC codes given in \cite{Gaborit2013low} .

\paragraph{BD-LRPC Codes.}
According to Theorem \ref{ProbaFinal}, an upper bound on the failure
probability of Algorithm \ref{DecodingAlgorithm1} is 
\[D_{New}=1-\left( 1-\frac{q^{r(d+t)}}{q^{m}-q^{r-1}}\right) P_{t}.\] 
In \cite{Franch2025bounded}, the success probability of the first phase of
the decoding algorithm is $P_{t}$, and the probability of the second phase
can be estimated by $1-q^{-m+2(d+t-1)r-r}$. Thus, an upper bound on the
failure probability of the decoding algorithm given by Franch and Li in \cite{Franch2025bounded} can be estimated by
\[D_{FL}=1-\left(1-q^{-m+2(d+t-1)r-r}\right) P_{t}.\]
In general $D_{New}\leq D_{FL}$, because $r(d+t)\leq 2(d+t-1)r-r$. Moreover, Algorithm \ref{DecodingAlgorithm1} can decode errors of rank $r$ up to $\frac{m}{d+t}$, while the decoding algorithm given in \cite{Franch2025bounded} can decode errors of rank $r$ up to $\frac{m}{2(d+t)-3}$. Since $\frac{m}{2(d+t)-3} < \frac{m}{d+t}$ and $D_{New}\leq D_{FL}$, then Algorithm \ref{DecodingAlgorithm1} can decode more errors
with a better probability. 

\paragraph{LRPC codes.} Using the work of \cite{Burle2023upper}, an upper bound on the failure
probability of the decoding algorithm of LRPC codes given by Gaborit et al. in \cite{Gaborit2013low} is 
\[D_{G}=\frac{q^{-(n-k)+dr}}{q-1}+\frac{q^{r(2d-1)}}{%
q^{m}-q^{r-1}}+\frac{q^{rd}}{q^{m}-q^{r-1}}.\]
The difference between the decoding algorithms of LRPC codes and BD-LRPC codes was given in \cite{Franch2025bounded}. We simply note that the main difference is on the first phase of the decoding algorithm. In fact, the decoding algorithm of LRPC codes given in \cite{Gaborit2013low} can decode errors of rank $r$ up to $\frac{n-k}{d}$ and BD-LRPC codes can decode errors of rank $r$ up to $\frac{(n-k)t}{d+t-1}$. Furthermore, as mentioned in Remark \ref{Compa}, when $t \geq 2$, and $d$ increases, the successful  decoding probability is higher for BD-LRPC codes than for LRPC codes.

\subsection{Simulation Results}

We implemented Algorithm \ref{DecodingAlgorithm1} using SageMath \cite%
{Sagemath2023}. The source code is available at \url{https://github.com/Tchatchiem/BD-LRPC_Codes}. We observed
that the theoretical bound given in Theorem \ref{ProbaFinal} approaches the
practical decoding probability.

Figure \ref{FgUpBoundCom} gives the graph of $D_{New}$, $D_{FL}$, $D_{G}$,
for $d=2$, $t=2$, $m=37$, $n=32$, $k=16$, $q=2$ and $0\leq r\leq 10$. We
can observe that $D_{New}$ is approximately equal to $D_{FL}$ for $1\leq
r\leq 5$ and that $D_{New}$ is less than $D_{FL}$ for $6\leq r$ $\leq 10$. \

\begin{center}
\begin{figure}[!h]
%[tbp]
\includegraphics[width=15cm]{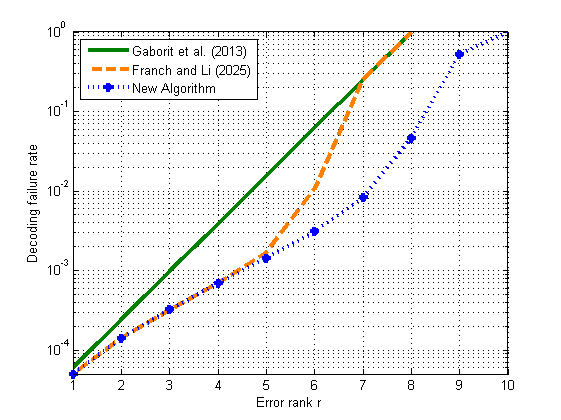}\newline
\caption{Upper bound on the decoding \ failure probability for $d=2$, $t=2$,
$m=37$, $n=32$, $k=16$, $q=2.$}
\label{FgUpBoundCom}
\end{figure}
\end{center}

\section{Some Properties of the Probability \texorpdfstring{$P_{t}$}{Pt}} \label{ProbaPt}

In this section, we present some properties of the probability $P_{t}$,
which corresponds to the probability that the matrix $\mathbf{M}_{t}$,
defined in (\ref{Mat_t}), has rank equal to $r(d+t-1)$. In the following, we set
\begin{equation*}
u:=n-k-r.
\end{equation*}%
This value will be used, as indicated in \cite{Franch2025bounded}, to
calculate $P_{t}$.

\subsection{General results}

Recall that $\mathbf{M}_{t}$ is a matrix of size $t(n-k)\times r(d+t-1)$
defined by
\begin{equation*}
\mathbf{M}_{t}=\left(
\begin{array}{ccccccc}
\mathbf{X}_{1} & \mathbf{X}_{2} & \cdots  & \mathbf{X}_{d} & \mathbf{0} &
\cdots  & \mathbf{0} \\
\mathbf{0} & \mathbf{X}_{1} & \mathbf{X}_{2} & \cdots  & \mathbf{X}_{d} &
\cdots  & \mathbf{0} \\
\vdots  & \ddots  & \ddots  & \ddots  &  & \ddots  & \vdots  \\
\mathbf{0} & \cdots  & \mathbf{0} & \mathbf{X}_{1} & \mathbf{X}_{2} & \cdots
& \mathbf{X}_{d}%
\end{array}%
\right)
\end{equation*}%
where $\mathbf{X}_{i}$ are independently and uniformly chosen from $\mathbb{F%
}_{q}^{(n-k)\times r}$. In \cite[Section 4]{Franch2025bounded} Franch and Li
performed some transformations on the matrix $\mathbf{M}_{t}$ in order to
simplify the calculation of its rank. More precisely, they demonstrated that
the rank of $\mathbf{M}_{t}$ is $r(d+t-1)$ if and only if the rank of $%
\mathbf{X}_{1}$ is $r$ and the rank of the matrix $M_{t}(\mathbf{Z},\mathbf{A%
})$ is $r(d-1)$, where the matrix $M_{t}(\mathbf{Z},\mathbf{A})$ is obtained
from $\mathbf{M}_{t}$ by some elementary transformations and has the
following form:
\begin{equation}
M_{t}(\mathbf{Z},\mathbf{A})=\left(
\begin{array}{c}
\mathbf{Z} \\
\mathbf{ZA} \\
\vdots  \\
\mathbf{ZA}^{t-1}%
\end{array}%
\right)   \label{MatZA}
\end{equation}%
where:

\begin{itemize}
\item $\mathbf{Z}$ is a uniformly random matrix in $\mathbb{F}%
_{q}^{(n-k-r)\times r(d-1)}$;

\item $\mathbf{A}$ is a block matrix of size $r(d-1)\times r(d-1)$ defined
by
\begin{equation}
\mathbf{A=}\left(
\begin{array}{ccccc}
\mathbf{A}_{1} & \mathbf{A}_{2} & \cdots & \mathbf{A}_{d-2} & \mathbf{A}%
_{d-1} \\
\mathbf{I}_{r} & \mathbf{0} & \cdots & \mathbf{0} & \mathbf{0} \\
\mathbf{0} & \mathbf{I}_{r} & \ddots & \vdots & \vdots \\
\vdots & \ddots & \ddots & \mathbf{0} & \mathbf{0} \\
\mathbf{0} & \cdots & \mathbf{0} & \mathbf{I}_{r} & \mathbf{0}%
\end{array}%
\right) \label{Acomp}
\end{equation}
where $\mathbf{A}_{1},\mathbf{A}_{2},\ldots ,\mathbf{A}_{d-1}$ are
independently and uniformly chosen at random from $\mathbb{F}_{q}^{r\times
r} $.
\end{itemize}

Since $\mathbf{X}_{1}$ is a random matrix of \ size $(n-k)\times r$%
, then we have
\begin{equation*}
\Pr (rank(\mathbf{X}_{1})=r)=\prod\limits_{i=0}^{r-1}\left(
1-q^{i-(n-k)}\right) .
\end{equation*}%
Let
\begin{equation}
Q_{t}=\Pr (rank(M_{t}(\mathbf{Z},\mathbf{A}))=r(d-1)). \label{DefQt}
\end{equation}
Then, according to \cite[Proposition 2.]{Franch2025bounded}, we have the
following:

\begin{proposition}
\label{ProbaReduc}The probability $P_{t}$ that $\mathbf{M}_{t}$ has rank
equal to $r(d+t-1)$ is
\begin{equation*}
P_{t}=Q_{t}\prod\limits_{i=0}^{r-1}\left( 1-q^{i-(n-k)}\right) .
\end{equation*}
\end{proposition}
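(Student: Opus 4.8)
The plan is to reduce the whole statement to the characterization already recorded above, namely that $rank(\mathbf{M}_{t})=r(d+t-1)$ holds if and only if both $rank(\mathbf{X}_{1})=r$ and $rank(M_{t}(\mathbf{Z},\mathbf{A}))=r(d-1)$ hold, and then to show that these two events are independent. Writing $E_{1}=\{rank(\mathbf{X}_{1})=r\}$ and $E_{2}=\{rank(M_{t}(\mathbf{Z},\mathbf{A}))=r(d-1)\}$, the characterization yields
\[
P_{t}=\Pr(E_{1}\cap E_{2})=\Pr(E_{1})\,\Pr(E_{2}\mid E_{1}).
\]
Since $\mathbf{X}_{1}$ is a uniform $(n-k)\times r$ matrix over $\mathbb{F}_{q}$, the factor $\Pr(E_{1})=\prod_{i=0}^{r-1}(1-q^{i-(n-k)})$ is the standard full-column-rank probability already recorded in the excerpt, so the entire task is to prove that $\Pr(E_{2}\mid E_{1})=Q_{t}$; that is, that conditioning on $rank(\mathbf{X}_{1})=r$ leaves the law of $(\mathbf{Z},\mathbf{A})$ unchanged.

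First I would make the elementary transformations explicit. Conditioned on a fixed value of $\mathbf{X}_{1}$ of rank $r$, choose an invertible $\mathbf{P}=\mathbf{P}(\mathbf{X}_{1})\in\mathbb{F}_{q}^{(n-k)\times(n-k)}$ with $\mathbf{P}\mathbf{X}_{1}=\left(\begin{smallmatrix}\mathbf{I}_{r}\\ \mathbf{0}\end{smallmatrix}\right)$, and apply the block-diagonal row operation $\mathrm{diag}(\mathbf{P},\ldots,\mathbf{P})$ to $\mathbf{M}_{t}$. This preserves the banded shape but replaces each block $\mathbf{X}_{i}$ by $\mathbf{P}\mathbf{X}_{i}$; writing $\mathbf{P}\mathbf{X}_{i+1}=\left(\begin{smallmatrix}\mathbf{A}_{i}\\ \mathbf{Z}_{i}\end{smallmatrix}\right)$ for $i=1,\ldots,d-1$ exhibits the top blocks $\mathbf{A}_{1},\ldots,\mathbf{A}_{d-1}$ of $\mathbf{A}$ and the bottom blocks assembling $\mathbf{Z}=(\mathbf{Z}_{1}\mid\cdots\mid\mathbf{Z}_{d-1})$. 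Using the identity blocks coming from $\mathbf{P}\mathbf{X}_{1}$ to clear the remaining columns then reduces the rank computation to $rank(M_{t}(\mathbf{Z},\mathbf{A}))=r(d-1)$, which is precisely the cited characterization.

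The decisive step is the distributional claim. For $i\geq 2$ the matrix $\mathbf{X}_{i}$ is uniform on $\mathbb{F}_{q}^{(n-k)\times r}$ and independent of $\mathbf{X}_{1}$; since $\mathbf{P}$ is a fixed invertible matrix once $\mathbf{X}_{1}$ is fixed, the map $\mathbf{X}_{i}\mapsto\mathbf{P}\mathbf{X}_{i}$ is a bijection of $\mathbb{F}_{q}^{(n-k)\times r}$, so each $\mathbf{P}\mathbf{X}_{i}$ is again uniform and the family remains jointly independent across $i$. Consequently the blocks $\mathbf{A}_{1},\ldots,\mathbf{A}_{d-1}$ are independent and uniform on $\mathbb{F}_{q}^{r\times r}$, while $\mathbf{Z}$ is independent of them and uniform on $\mathbb{F}_{q}^{(n-k-r)\times r(d-1)}$, exactly matching the definition of $M_{t}(\mathbf{Z},\mathbf{A})$. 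This conclusion holds for every admissible value of $\mathbf{X}_{1}$, so $\Pr(E_{2}\mid\mathbf{X}_{1})$ is the constant $Q_{t}$ on the event $E_{1}$; averaging over $\mathbf{X}_{1}$ gives $\Pr(E_{2}\mid E_{1})=Q_{t}$, and substituting into the chain-rule identity yields $P_{t}=Q_{t}\prod_{i=0}^{r-1}(1-q^{i-(n-k)})$.

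I expect the main obstacle to be exactly this last distributional argument: one must verify that the reduction matrix $\mathbf{P}$, although it depends on $\mathbf{X}_{1}$, acts only on the blocks $\mathbf{X}_{i}$ with $i\geq 2$ (which are independent of $\mathbf{X}_{1}$) and therefore cannot leak any information about $\mathbf{X}_{1}$ into $(\mathbf{Z},\mathbf{A})$; equivalently, that the further column operations producing the companion form of $\mathbf{A}$ do not correlate the blocks. Once the invariance of the uniform law under the fixed invertible map $\mathbf{P}$ is established, the independence of $E_{1}$ and $E_{2}$, and hence the product formula, follow at once.
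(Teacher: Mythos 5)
Your proof is correct in substance, but note that the paper itself gives no argument for this proposition at all: it is imported verbatim as a citation of Proposition 2 of Franch and Li, with the rank characterization and the distributional claims about $\mathbf{Z}$ and $\mathbf{A}$ stated as facts established in that reference. What you have done is reconstruct the missing argument, and the two ingredients you isolate are the right ones: the factorization $P_{t}=\Pr(E_{1})\Pr(E_{2}\mid E_{1})$ via the rank characterization, and the observation that conditioning on any fixed full-column-rank value of $\mathbf{X}_{1}$ makes $\mathbf{P}$ a fixed invertible matrix, so that $(\mathbf{X}_{2},\ldots,\mathbf{X}_{d})\mapsto(\mathbf{P}\mathbf{X}_{2},\ldots,\mathbf{P}\mathbf{X}_{d})$ is a bijection preserving the uniform joint law, whence $\Pr(E_{2}\mid\mathbf{X}_{1})\equiv Q_{t}$ on $E_{1}$. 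This is exactly the argument the paper does spell out later, in its self-contained treatment of the case $d=2$, $t=2$ (Section \ref{ProofProba_d2}, where $\mathbf{A}=-\mathbf{P}_{1}\mathbf{X}_{2}$ and $\mathbf{Z}=\mathbf{P}_{2}\mathbf{X}_{2}$ are declared random because $\mathbf{P}$ is invertible), so your route is consistent with the authors' own methods. The one point you should not present as self-evident is the claim that for general $d$ and $t$ the blocks $\mathbf{A}_{i}$ and $\mathbf{Z}_{i}$ are literally the top and bottom blocks of $\mathbf{P}\mathbf{X}_{i+1}$: for $t>1$ the elimination that brings $\mathbf{M}_{t}$ to the form $M_{t}(\mathbf{Z},\mathbf{A})$ involves clearing across several overlapping block rows, and a priori $(\mathbf{Z},\mathbf{A})$ could be a more complicated function of $(\mathbf{X}_{2},\ldots,\mathbf{X}_{d})$. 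Your argument survives as long as that function is a bijection for each fixed full-rank $\mathbf{X}_{1}$ (a parameter count confirms both sides have $(d-1)r(n-k)$ coordinates), but verifying this is precisely the content of the cited Proposition 2, so either check it explicitly or cite it as the paper does.
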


Some properties of the subspace spanned by the rows of the matrix $M_{t}(%
\mathbf{Z},\mathbf{A})$ were presented in \cite[Lemma 4]{Franch2025bounded}.
The direct consequence of this result is as follows.

\begin{corollary}
\label{ProbaIneq} The probability $P_{t}$ has the following properties:
\begin{itemize}
    \item [(i)] $P_{t}\leq $ $P_{t+1}$.
    \item [(ii)]  $P_{t}=$ $P_{r(d-1)}$, for $t\geq r(d-1)$.
\end{itemize}
\end{corollary}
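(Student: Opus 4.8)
The plan is to reduce both assertions to the corresponding statements about the auxiliary quantity $Q_{t}=\Pr (rank(M_{t}(\mathbf{Z},\mathbf{A}))=r(d-1))$ and then read them off from the behaviour of the row space of $M_{t}(\mathbf{Z},\mathbf{A})$. By Proposition \ref{ProbaReduc} we have $P_{t}=Q_{t}\cdot c$ with $c=\prod_{i=0}^{r-1}\left( 1-q^{i-(n-k)}\right)$, and the decisive point is that this factor $c$ does not depend on $t$ (and is nonnegative, indeed positive once $r\leq n-k$). Consequently any monotonicity or stabilization proved for $Q_{t}$ transfers verbatim to $P_{t}$. Writing $V_{t}$ for the $\mathbb{F}_{q}$-row space of the matrix $M_{t}(\mathbf{Z},\mathbf{A})$ of (\ref{MatZA}), we have $rank(M_{t}(\mathbf{Z},\mathbf{A}))=\dim V_{t}$, so that $Q_{t}=\Pr (\dim V_{t}=r(d-1))$, where $r(d-1)$ is precisely the ambient dimension.

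For part (i), the observation is that $M_{t+1}(\mathbf{Z},\mathbf{A})$ is obtained from $M_{t}(\mathbf{Z},\mathbf{A})$ by appending the extra block of rows $\mathbf{Z}\mathbf{A}^{t}$, so $V_{t}\subseteq V_{t+1}$. Since $r(d-1)$ is the largest dimension attainable, the event $\{\dim V_{t}=r(d-1)\}$ is contained in the event $\{\dim V_{t+1}=r(d-1)\}$, whence $Q_{t}\leq Q_{t+1}$ and therefore $P_{t}\leq P_{t+1}$.

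For part (ii), I would invoke the stabilization of the increasing chain $V_{1}\subseteq V_{2}\subseteq \cdots$ furnished by the row-space properties of \cite[Lemma 4]{Franch2025bounded}. The mechanism behind it is the Cayley--Hamilton theorem applied to the $r(d-1)\times r(d-1)$ matrix $\mathbf{A}$ of (\ref{Acomp}): the power $\mathbf{A}^{r(d-1)}$ is a linear combination of $\mathbf{I}_{r(d-1)},\mathbf{A},\ldots ,\mathbf{A}^{r(d-1)-1}$, which forces every row of $\mathbf{Z}\mathbf{A}^{r(d-1)}$ to lie in $V_{r(d-1)}$. Hence $V_{r(d-1)}$ is invariant under right multiplication by $\mathbf{A}$, and so $V_{t}=V_{r(d-1)}$ for all $t\geq r(d-1)$. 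The events $\{\dim V_{t}=r(d-1)\}$ and $\{\dim V_{r(d-1)}=r(d-1)\}$ then coincide as subsets of the sample space, giving $Q_{t}=Q_{r(d-1)}$ and therefore $P_{t}=P_{r(d-1)}$.

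Once the row-space viewpoint is adopted the corollary is very short, since the substantive work is already packaged in the cited lemma; the obstacle is therefore conceptual rather than computational. The one step deserving care is the upgrade, in part (ii), from ``each individual cyclic span $\langle \mathbf{z},\mathbf{z}\mathbf{A},\ldots \rangle$ stabilizes'' to ``the full sum $V_{t}$ stabilizes'': this is exactly the $\mathbf{A}$-invariance of $V_{r(d-1)}$ obtained from a single Cayley--Hamilton relation applied rowwise, which guarantees that once the chain stops growing it stops forever and that this happens no later than step $r(d-1)$. A secondary point, easy but worth stating, is that the translation of both the inequality $Q_{t}\leq Q_{t+1}$ and the equality $Q_{t}=Q_{r(d-1)}$ into the same statements for $P_{t}$ is legitimate precisely because the factor $c$ in Proposition \ref{ProbaReduc} is independent of $t$ and nonnegative.
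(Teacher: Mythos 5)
Your proof is correct and takes essentially the same approach as the paper: the paper obtains the corollary as a direct consequence of the row-space properties of $M_{t}(\mathbf{Z},\mathbf{A})$ stated in \cite[Lemma 4]{Franch2025bounded}, and your monotone chain of row spaces together with the Cayley--Hamilton stabilization is precisely the content of that cited lemma, transferred to $P_{t}$ via the $t$-independent factor in Proposition \ref{ProbaReduc}. The only difference is that you supply the details the paper delegates to the citation.
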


\begin{remark}
\label{ProbaUB}As indicated in \cite{Franch2025bounded}, a necessary
condition for the matrix $\mathbf{M}_{t}$ to have full row rank is that $%
rank(\mathbf{X}_{1})=r$ and $rank(\mathbf{X}_{d})=r$. Therefore, the
probability $P_{t}$ is less than the probability that $rank(\mathbf{X}_{1})=r
$ and $rank(\mathbf{X}_{d})=r$. The matrices $\mathbf{X}_{1}$ and $\mathbf{X}_{d}$ \ are uniformly distributed over $\mathbb{F}_{q}^{(n-k)\times r}$ and the probability that a random $(n-k)\times r$ matrix over $\mathbb{F}_{q}$ has the rank $r$ is equal to $\prod_{i=0}^{r-1}\left(
1-q^{i-(n-k)}\right) $. So, $P_{t}$ is upper bounded by $P_{t}\leq \left(
\prod_{i=0}^{r-1}\left( 1-q^{i-(n-k)}\right) \right) ^{2}.$
\end{remark}

According to Corollary \ref{ProbaIneq}, the optimal value of $P_{t}$ is reached when $t=r(d-1)$. The following theorem shows that this optimal
value of $P_{t}$ is close to the upper bound given in Remark \ref{ProbaUB} .

\begin{theorem}
\label{ProbaExact}Assume that $t=r(d-1)$. Then the probability $P_{t}$ is
equal to
\begin{equation}
P_{t}=\prod\limits_{j=0}^{r-1}\left( 1-q^{j-(n-k)}\right)
\prod\limits_{j=1}^{r}\left( 1-q^{j-(n-k)}\right)  \label{ProbaExact_1}
\end{equation}%
So,
\begin{equation}
1-\frac{q+1}{q-1}q^{-u}\leq P_{t}\leq 1-\frac{q+1}{q}q^{-u}+q^{-2u-1}
\label{ProbaExact_2}
\end{equation}
\end{theorem}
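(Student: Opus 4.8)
The plan is to separate the statement into the exact evaluation (\ref{ProbaExact_1}) and the numerical estimates (\ref{ProbaExact_2}), and to reduce the first part to a single full-rank computation for the structured matrix $M_t(\mathbf{Z},\mathbf{A})$ of (\ref{MatZA}). First I would invoke Proposition \ref{ProbaReduc}, which gives $P_{t}=Q_{t}\prod_{i=0}^{r-1}\left(1-q^{i-(n-k)}\right)$. Since the factor $\prod_{i=0}^{r-1}\left(1-q^{i-(n-k)}\right)$ is already the left-hand product in (\ref{ProbaExact_1}), everything reduces to proving that, at the saturation value $t=r(d-1)$,
\[
Q_{r(d-1)}=\prod_{j=1}^{r}\left(1-q^{j-(n-k)}\right),
\]
where by (\ref{DefQt}) the quantity $Q_{t}$ is the probability that the matrix $M_{t}(\mathbf{Z},\mathbf{A})$ of size $t(n-k-r)\times r(d-1)$ has full column rank $r(d-1)$.

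For this exact value I would first exploit saturation. At $t=r(d-1)$ the number of block stages equals the size of $\mathbf{A}$, so by the Cayley--Hamilton theorem each row of $\mathbf{Z}\mathbf{A}^{i}$ with $i\geq r(d-1)$ already lies in the span of the rows of $\mathbf{Z},\mathbf{Z}\mathbf{A},\ldots,\mathbf{Z}\mathbf{A}^{r(d-1)-1}$; this is precisely the stabilization recorded in Corollary \ref{ProbaIneq}(ii). Consequently the row space of $M_{r(d-1)}(\mathbf{Z},\mathbf{A})$ equals the smallest $\mathbf{A}$-invariant subspace containing the rows of $\mathbf{Z}$, and $M_{r(d-1)}(\mathbf{Z},\mathbf{A})$ has full column rank if and only if that invariant subspace is all of $\mathbb{F}_{q}^{r(d-1)}$, equivalently if and only if the largest $\mathbf{A}$-invariant subspace contained in $\ker \mathbf{Z}$ is $\{0\}$. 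I would then compute the probability of this event over the joint randomness of $\mathbf{Z}$ and of the blocks $\mathbf{A}_{1},\ldots,\mathbf{A}_{d-1}$, using the block-companion shape of $\mathbf{A}$ in (\ref{Acomp}) together with the structural description of the row space of $M_{t}(\mathbf{Z},\mathbf{A})$ from \cite[Lemma 4]{Franch2025bounded}, and show that it equals $\prod_{j=1}^{r}\left(1-q^{j-(n-k)}\right)$. I expect this count to be the main obstacle: the event is strictly finer than ``$\mathbf{X}_{1}$ and $\mathbf{X}_{d}$ have rank $r$'', since Remark \ref{ProbaUB} only yields the weaker upper bound $\bigl(\prod_{i=0}^{r-1}(1-q^{i-(n-k)})\bigr)^{2}$ and the true value falls below it by exactly the factor $(1-q^{-u})/(1-q^{-(n-k)})$, so one cannot get away with necessary conditions alone and a genuine invariant-subspace computation is unavoidable.

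Once (\ref{ProbaExact_1}) is established, the bounds (\ref{ProbaExact_2}) are elementary. I would write $P_{t}=F_{1}F_{2}$ with $F_{1}=\prod_{j=0}^{r-1}(1-q^{j-(n-k)})$ and $F_{2}=\prod_{j=1}^{r}(1-q^{j-(n-k)})$, and use $n-k=u+r$. For the upper bound I would retain only the dominant factor of each product: the $j=r-1$ term gives $F_{1}\leq 1-q^{-(u+1)}$ and the $j=r$ term gives $F_{2}\leq 1-q^{-u}$, all remaining factors lying in $(0,1]$; multiplying yields
\[
P_{t}\leq\left(1-q^{-u}\right)\left(1-q^{-(u+1)}\right)=1-\frac{q+1}{q}q^{-u}+q^{-2u-1}.
\]
For the lower bound I would apply $\prod_{i}(1-a_{i})\geq 1-\sum_{i}a_{i}$ and sum the two geometric series, obtaining $\sum_{i}a_{i}=q^{-(n-k)}\frac{(q^{r}-1)(q+1)}{q-1}=q^{-u}(1-q^{-r})\frac{q+1}{q-1}<\frac{q+1}{q-1}q^{-u}$, whence $P_{t}\geq 1-\frac{q+1}{q-1}q^{-u}$. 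Both steps are routine manipulations of finite products, so the entire difficulty of the theorem is concentrated in the exact evaluation of $Q_{r(d-1)}$ described in the previous paragraph.
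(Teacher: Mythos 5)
Your reduction to $Q_{r(d-1)}$ via Proposition \ref{ProbaReduc}, your Cayley--Hamilton reformulation of the full-rank condition as ``the largest $\mathbf{A}$-invariant subspace contained in the right kernel of $\mathbf{Z}$ is trivial,'' and your derivation of the bounds (\ref{ProbaExact_2}) from (\ref{ProbaExact_1}) are all correct and consistent with what the paper does. But the proposal stops exactly where the theorem starts: you write that you ``would then compute the probability of this event \ldots and show that it equals $\prod_{j=1}^{r}\left(1-q^{j-(n-k)}\right)$,'' and you yourself flag this count as ``the main obstacle.'' That count is the entire content of the theorem, and neither the block-companion shape of $\mathbf{A}$ in (\ref{Acomp}) nor \cite[Lemma 4]{Franch2025bounded} hands it to you: you would still need the joint distribution, over $\mathbf{Z}$ and $\mathbf{A}_{1},\ldots,\mathbf{A}_{d-1}$, of the event that no nonzero $\mathbf{A}$-invariant subspace lies in $\ker\mathbf{Z}$, and no argument for that is given. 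So the proof is incomplete at its central step.

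For comparison, the paper closes this gap by converting the Krylov condition into a unimodularity statement: after conjugating $\mathbf{A}$ by the anti-diagonal permutation $\mathbf{B}$ into the companion form $\hat{\mathbf{A}}$, \cite[Remark 2.6]{Arora2021unimodular} asserts that $M_{r(d-1)}(\hat{\mathbf{Z}},\hat{\mathbf{A}})$ has rank $r(d-1)$ if and only if the polynomial matrix $\bigl(X\mathbf{I}_{r(d-1)}-\hat{\mathbf{A}}^{\intercal}\ \ \hat{\mathbf{Z}}^{\intercal}\bigr)$ is unimodular; elementary operations then collapse this to the unimodularity of a single $r\times(n-k)$ polynomial matrix $\mathbf{Y}$ of degree $d-1$ with leading coefficient $(\mathbf{I}_{r}\ \mathbf{0})$ and uniformly random lower-order coefficients, for which \cite[Theorem 4.1]{Arora2021unimodular} gives probability $\prod_{i=1}^{r}\left(1-q^{i-(n-k)}\right)$. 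Some such mechanism (or an equivalent direct count over invariant subspaces) is indispensable; without it the claimed value of $Q_{r(d-1)}$, and hence (\ref{ProbaExact_1}), is unsupported. The elementary manipulations you give for (\ref{ProbaExact_2}) are fine and match the paper's concluding estimates.
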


\begin{proof}
 As stated in Proposition \ref{ProbaReduc}, the calculation of $P_{t}$ can be
reduced to the calculation of $Q_{t}$. In Section \ref{ProofProbaExact},
we have shown that 
\begin{equation}
Q_{r(d-1)}=\prod\limits_{j=1}^{r}\left( 1-q^{j-(n-k)}\right) .
\label{ValuQt}
\end{equation} \hfill
\end{proof}

\begin{remark}
The expression of the probability $P_{t}$ given in (\ref
{ProbaExact_1}) was proven in \cite{Franch2025bounded} for the case where $%
d=2$, and for the case where $d\geq 2$ and $r=1$. For the general case,
Franch and Li proposed in \cite[Conjecture 1]{Franch2025bounded} the
following conjecture:
\begin{equation*}
K\leq Q_{r(d-1)}\leq 1-q^{-u}
\end{equation*}%
where
\begin{equation*}
K=\frac{H_{q}((d-1)r+u-1)}{H_{q}(u-1)}
\end{equation*}%
and the function $H_{q}$ is defined by
\begin{equation*}
H_{q}(n)=\prod\limits_{i=1}^{n}(1-q^{-i}).
\end{equation*}
The exact value of $Q_{r(d-1)}$ that we obtained in (\ref{ValuQt}) confirms
this conjecture. Indeed, the upper bound follows from the expression for $Q_{r(d-1)}$. For the lower bound, we have
\begin{equation*}
K=Q_{r(d-1)}\prod\limits_{i=n-k}^{r(d-1)+u-1}(1-q^{-i}).
\end{equation*}%
Thus,
\begin{equation*}
K\leq Q_{r(d-1)}.
\end{equation*}
It should be noted that this inequality has a matrix interpretation. As explained by \cite[Corollary 1]{Franch2025bounded}, the number $K$
represents the probability that the matrix $M_{t}(\mathbf{Z},\mathbf{A})$
has a full row rank when $\mathbf{Z}$ and $\mathbf{A}$ are random matrices
and $t=r(d-1)$. By the definition of $Q_{t}$ in (\ref{DefQt}), $Q_{r(d-1)}$
represents the probability that the matrix $M_{t}(\mathbf{Z},\mathbf{A})$
has a full row rank when $t=r(d-1)$,  $\mathbf{Z}$ is random, and $\mathbf{A}
$ is randomly generated according to (\ref{Acomp}). Thus, when $t=r(d-1)$,
the probability that $M_{t}(\mathbf{Z},\mathbf{A})$ has a full row rank when
$\mathbf{A}$ is random is less than the probability that $M_{t}(\mathbf{Z},%
\mathbf{A})$ has a full row rank when $\mathbf{A}$ is generated as in (\ref%
{Acomp}). By \cite[Lemma 4]{Franch2025bounded}, this result is also true when $r(d-1) \leq t$. It will therefore be interesting to study the comparison of these two probabilities when $t<r(d-1)$. 
\end{remark}

Remember that, the matrix $\mathbf{M}_{t}$ is a matrix of size $t(n-k)\times r(d+t-1)$. Hence, a necessary condition that $rank(\mathbf{M}_{t})=r(d+t-1)$ is that 
$r(d+t-1)\leq t(n-k)$, that is, $\left\lceil r(d-1)/u\right\rceil
\leq t$.
 Thus, in practice, we can choose $t$ such that $\left\lceil r(d-1)/u \right\rceil \leq t\leq (d-1)r$. In \cite{Franch2025bounded},
Franch and Li stated that the experimental results showed that, for a smaller
integer $t\geq \left\lceil r(d-1)/u\right\rceil +1$, $Q_{t}$ is
close to $Q_{(d-1)r}$.  Then, in \cite[ Remark 5.]{Franch2025bounded}, they
provided a lower bound of $P_{t}$ in the case where $d=2$, $t=\left\lceil
r/u \right\rceil +1$, and  $r\geq u$. Note that when $0<r<u$, then $t=\left\lceil
r/u\right\rceil +1=2$. Thus, in Section \ref{ProofProba_d2}, we
calculated $P_{t}$ in the case where $d=2$ and $t=2$.

\begin{proposition}
\label{Proba_d2}\ Assume that $d=2$.

(a) For $r\geq u$ and $t=\left\lceil r/u\right\rceil +1,$%
\begin{equation}
P_{t}\geq 1-\frac{q^{-u/2}+q^{-u}+q^{-u+1}}{1-q}.  \label{Proba_d2_1}
\end{equation}

(b) For $t=2,$
\begin{equation}
P_{2}=\prod\limits_{j=0}^{r-1}\left( 1-q^{j-(n-k)}\right)
\sum_{v=\left\lceil \frac{r}{2}\right\rceil }^{\min
\{r,u\}}q^{-ur}\prod\limits_{i=0}^{v-1}\frac{(q^{u}-q^{i})(q^{r}-q^{i})}{%
q^{v}-q^{i}}\prod\limits_{i=0}^{r-v-1}\left( 1-q^{-v+i}\right) .
\label{Proba_d2_2}
\end{equation}
\end{proposition}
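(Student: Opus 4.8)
The plan is to use the reduction of Proposition \ref{ProbaReduc}, which writes $P_t = Q_t\prod_{i=0}^{r-1}(1-q^{i-(n-k)})$, so that everything reduces to analysing $Q_t=\Pr(\mathrm{rank}(M_t(\mathbf{Z},\mathbf{A}))=r(d-1))$. For $d=2$ this simplifies considerably: $r(d-1)=r$, the block matrix $\mathbf{A}$ in \eqref{Acomp} collapses to a single uniformly random $\mathbf{A}=\mathbf{A}_1\in\mathbb{F}_q^{r\times r}$, and $M_t(\mathbf{Z},\mathbf{A})$ of \eqref{MatZA} becomes the $tu\times r$ matrix stacking the blocks $\mathbf{Z},\mathbf{Z}\mathbf{A},\dots,\mathbf{Z}\mathbf{A}^{t-1}$, where $\mathbf{Z}\in\mathbb{F}_q^{u\times r}$ is uniform and independent of $\mathbf{A}$. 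Both parts then amount to computing, respectively bounding, the probability that this matrix has full column rank $r$.

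For part (b), with $t=2$ the matrix stacks $\mathbf{Z}$ over $\mathbf{Z}\mathbf{A}$, and I would condition on $v:=\mathrm{rank}(\mathbf{Z})$ via the law of total probability. Writing $K=\ker\mathbf{Z}$ (so $\dim K=r-v$), a vector $\mathbf{x}$ lies in the kernel of the stacked matrix exactly when $\mathbf{x}\in K$ and $\mathbf{A}\mathbf{x}\in K$; hence full column rank is equivalent to injectivity of the induced map $\bar{\mathbf{A}}\colon K\to\mathbb{F}_q^r/K$, $\mathbf{x}\mapsto\mathbf{A}\mathbf{x}+K$. Since $\dim(\mathbb{F}_q^r/K)=v$, injectivity forces $r-v\le v$, which is precisely why the summation in \eqref{Proba_d2_2} starts at $v=\lceil r/2\rceil$ (and $v\le\min\{r,u\}$ because $\mathbf{Z}$ is $u\times r$). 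The two remaining ingredients are: (i) the probability that a uniform $u\times r$ matrix has rank $v$, namely $q^{-ur}\prod_{i=0}^{v-1}\tfrac{(q^u-q^i)(q^r-q^i)}{q^v-q^i}$; and (ii) the conditional probability of injectivity. For (ii) I would use that $\mathbf{A}$ is uniform and independent of $\mathbf{Z}$, so that for a fixed $K$ the restriction $\mathbf{x}\mapsto\mathbf{A}\mathbf{x}$ is a uniformly random linear map $K\to\mathbb{F}_q^r$, and composing with the quotient, $\bar{\mathbf{A}}$ is a uniformly random linear map from an $(r-v)$-dimensional space into a $v$-dimensional space; the number of injective such maps is $\prod_{i=0}^{r-v-1}(q^v-q^i)$, giving conditional probability $\prod_{i=0}^{r-v-1}(1-q^{-v+i})$. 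Multiplying (i) and (ii), summing over $v$, and reinstating the factor $\prod_{i=0}^{r-1}(1-q^{i-(n-k)})$ yields \eqref{Proba_d2_2}. The only delicate point is the justification in (ii): that conditioning on $\mathrm{rank}(\mathbf{Z})=v$ leaves $\mathbf{A}$ uniform, and that the injectivity probability depends on $K$ only through $\dim K$ (by $\mathrm{GL}_r$-invariance).

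For part (a), in the regime $r\ge u$ with $t=\lceil r/u\rceil+1$, I would bound the failure probability. Writing $1-P_t\le(1-Q_t)+\bigl(1-\prod_{i=0}^{r-1}(1-q^{i-(n-k)})\bigr)$, the second term is handled by a geometric-series estimate, $1-\prod_{i=0}^{r-1}(1-q^{i-(r+u)})\le\sum_{i=0}^{r-1}q^{i-r-u}\le\frac{q^{-u}}{q-1}$, which already produces part of the claimed bound. For $1-Q_t$ I would use a union bound over potential kernel vectors: $M_t(\mathbf{Z},\mathbf{A})\mathbf{x}=0$ means the Krylov space $W_{\mathbf{x}}=\langle\mathbf{x},\mathbf{A}\mathbf{x},\dots,\mathbf{A}^{t-1}\mathbf{x}\rangle$ is contained in $\ker\mathbf{Z}$, and for fixed $\mathbf{A}$ the probability of this over the independent uniform $\mathbf{Z}$ is $q^{-u\dim W_{\mathbf{x}}}$; hence $1-Q_t\le\sum_{\mathbf{x}\ne 0}\mathbb{E}_{\mathbf{A}}\bigl[q^{-u\dim W_{\mathbf{x}}}\bigr]$. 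Since $t=\lceil r/u\rceil+1$ gives $ut\ge r+u$, the generic vectors (with $\dim W_{\mathbf{x}}=t$) contribute only about $q^{r}q^{-ut}\le q^{-u}$, so the entire difficulty lies in the vectors of small Krylov dimension.

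I expect part (a) to be the main obstacle: unlike the clean conditioning in part (b), the powers $\mathbf{A}^i$ introduce genuine dependence, so the union bound must be organised by the value $w=\dim W_{\mathbf{x}}$ and the number of vectors with a given Krylov dimension must be controlled for random $\mathbf{A}$; the borderline range $w\approx r/2$ is what produces the $q^{-u/2}$ term in \eqref{Proba_d2_1}. For this step I would lean on the subspace-chain estimates behind \cite[Remark 5]{Franch2025bounded} and \cite[Lemma 4]{Franch2025bounded} (the same structural facts underlying Corollary \ref{ProbaIneq}) rather than attempt a self-contained count. Part (b), by contrast, is essentially a one-line application of the law of total probability once the injectivity reformulation is in place.
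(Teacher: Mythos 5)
Your proof of part (b) is essentially the paper's own argument: the paper likewise reduces to the stacked matrix $\binom{\mathbf{Z}}{\mathbf{ZA}}$ with $\mathbf{Z},\mathbf{A}$ independent and uniform, conditions on $v=rank(\mathbf{Z})$, and shows full rank is equivalent to full column rank of a uniformly random $v\times(r-v)$ block (your induced map $K\to\mathbb{F}_q^{r}/K$ written in coordinates), multiplying the same three factors. For part (a) the paper gives no argument at all --- it simply cites \cite[Remark 5]{Franch2025bounded} --- so your deferral to that reference for the hard Krylov-dimension count is exactly what the paper does, and your additional (admittedly incomplete) sketch goes beyond what is required.
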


\begin{proof}
The relation (\ref{Proba_d2_1}) is proven in \cite[ Remark 5.]%
{Franch2025bounded}. The relation (\ref{Proba_d2_2}) is proven in Section %
\ref{ProofProba_d2}.
\end{proof}

Since $P_{t}$ increases with $t$, the expression of $P_{2}$ given in (\ref
{Proba_d2_2}) can be used as a lower bound on $P_{t}$ when $d=2$ and $t\geq 2
$. Proposition \ref{Proba_d2} provides a lower bound of $P_{t}$ in the
simple case where $d=2$. The following theorem gives a lower bound in the general case.

\begin{theorem}
\label{ProbaLB}Assume that $\left\lceil r(d-1)/u\right\rceil \leq
t\leq rd$. Then\bigskip\
\begin{equation*}
P_{t}\geq 1-\min_{1\leq j\leq t}\{R_{j}\}
\end{equation*}%
where
\begin{equation*}
R_{j}=q^{r(d-1)-uj}+\sum_{i=1}^{j-1}\left[
\begin{array}{c}
j \\
i%
\end{array}%
\right] _{q}q^{-ui}.
\end{equation*}
\end{theorem}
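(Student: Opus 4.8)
The plan is to decouple the minimum over $j$ from the main estimate by exploiting monotonicity, so that the $\min$ appears for free. By Corollary \ref{ProbaIneq}(i) the quantity $P_{t}$ is non-decreasing in $t$, hence $P_{t}\geq P_{j}$ for every $j$ with $1\leq j\leq t$. Consequently, if I can prove the single per-level bound
\begin{equation*}
1-P_{j}\leq R_{j}\qquad\text{for every }j\text{ with }1\leq j\leq t,
\end{equation*}
then $1-P_{t}\leq 1-P_{j}\leq R_{j}$ for each such $j$, and optimizing over $j$ gives $1-P_{t}\leq \min_{1\leq j\leq t}R_{j}$, which is exactly the assertion. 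This reduction is harmless at small $j$: when $j<\lceil r(d-1)/u\rceil$ the matrix $\mathbf{M}_{j}$ (of size $j(n-k)\times r(d+j-1)$) cannot have full column rank, so $P_{j}=0$, while $r(d-1)-uj\geq 1$ forces $R_{j}\geq q>1$, and $1-P_{j}\leq R_{j}$ holds trivially. Thus everything reduces to establishing $1-P_{j}\leq R_{j}$ for a generic level $j$.

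For that estimate I would pass to the reduced matrix $M_{j}(\mathbf{Z},\mathbf{A})$ of (\ref{MatZA}). By Proposition \ref{ProbaReduc}, $P_{j}=Q_{j}\,c$ with $c=\prod_{i=0}^{r-1}(1-q^{i-(n-k)})$, so
\begin{equation*}
1-P_{j}\leq (1-Q_{j})+\Big(1-c\Big),
\end{equation*}
and since $n-k=u+r$ the second term is cheap, bounded by $\sum_{i=0}^{r-1}q^{i-u-r}\leq q^{-u}/(q-1)$. The heart of the proof is therefore a first-moment bound on $1-Q_{j}=\Pr(\operatorname{rank}M_{j}(\mathbf{Z},\mathbf{A})<r(d-1))$, for the $uj\times r(d-1)$ matrix whose row blocks are $\mathbf{Z},\mathbf{Z}\mathbf{A},\dots,\mathbf{Z}\mathbf{A}^{j-1}$. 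Rank deficiency means there is a nonzero $v$ with $\mathbf{Z}\mathbf{A}^{k}v=0$ for $k=0,\dots,j-1$, i.e.\ the Krylov space $U_{v}=\langle v,\mathbf{A}v,\dots,\mathbf{A}^{j-1}v\rangle$ is annihilated by $\mathbf{Z}$; for a fixed subspace $U$ of dimension $i$ this has probability exactly $q^{-ui}$, since each of the $u$ uniform rows of $\mathbf{Z}$ must lie in the $(r(d-1)-i)$-dimensional space $U^{\perp}$.

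The plan is then to condition on $\mathbf{A}$ and split the failure event according to $i:=\dim U_{v}\in\{1,\dots,j\}$. In the generic case $i=j$, where $v,\mathbf{A}v,\dots,\mathbf{A}^{j-1}v$ are independent, the map $v\mapsto(\mathbf{Z}v,\dots,\mathbf{Z}\mathbf{A}^{j-1}v)$ behaves like a uniform $uj\times r(d-1)$ matrix (this is where the maximal-row-span reduction and the uniformity of $\mathbf{Z}$ enter), and the probability of a column-rank defect is at most $1-\prod_{i=0}^{r(d-1)-1}(1-q^{i-uj})\leq q^{r(d-1)-uj}/(q-1)$, producing the first term of $R_{j}$ with slack. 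For each degenerate value $i\in\{1,\dots,j-1\}$, the space $U_{v}$ is $\mathbf{A}$-cyclic and the dependencies among $v,\mathbf{A}v,\dots,\mathbf{A}^{j-1}v$ form a $(j-i)$-dimensional subspace of $\mathbb{F}_{q}^{j}$; counting these relation patterns by $\left[\begin{array}{c}j\\j-i\end{array}\right]_{q}=\left[\begin{array}{c}j\\i\end{array}\right]_{q}$ and multiplying by the kill-probability $q^{-ui}$ yields the summand $\left[\begin{array}{c}j\\i\end{array}\right]_{q}q^{-ui}$. Adding the two regimes gives $1-Q_{j}\leq R_{j}$ uniformly in $\mathbf{A}$, with the $1/(q-1)$ slack in the generic term and in the $i=1$ summand absorbing the cheap factor $1-c$; here I would lean on the structural description of the row space of $M_{j}(\mathbf{Z},\mathbf{A})$ in \cite[Lemma 4]{Franch2025bounded} and the counting technique of \cite{Arora2021unimodular,Semaev2021probabilistic}.

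The step I expect to be the main obstacle is controlling the degenerate regime so that the relevant count is precisely the Gaussian binomial $\left[\begin{array}{c}j\\i\end{array}\right]_{q}$ rather than the vastly larger number $\left[\begin{array}{c}r(d-1)\\i\end{array}\right]_{q}$ of all $i$-dimensional subspaces. A naive union bound over kernel vectors, or even over all $\mathbf{A}$-invariant subspaces of the ambient $r(d-1)$-dimensional space, is far too lossy; the gain must be extracted by tying the count to the $j$ row-blocks, i.e.\ to the relation patterns living in $\mathbb{F}_{q}^{j}$, and doing so uniformly over the random companion matrix $\mathbf{A}$. Making this bookkeeping precise---verifying that the generic and degenerate contributions do not double-count, that each degenerate pattern genuinely contributes at most $q^{-ui}$, and that the slack is large enough to absorb the factor $1-c$---is the delicate part, and it is exactly where the upper constraint $t\leq rd$ and the cyclic structure of $\mathbf{A}$ are used.
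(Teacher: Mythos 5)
Your outer reduction (use Corollary \ref{ProbaIneq}(i) to get $P_{t}\geq P_{j}$ and then prove a per-level bound $1-P_{j}\leq R_{j}$) is exactly how the paper concludes, and your identification of the generic term $q^{r(d-1)-uj}$ and of Gaussian-binomial counting over subspaces of $\mathbb{F}_{q}^{j}$ is pointed in the right direction. But the proposal has a genuine gap, and you flag it yourself: the entire degenerate regime --- showing that the stratum where the annihilated space has dimension $i<j$ contributes at most $\left[\begin{smallmatrix} j \\ i \end{smallmatrix}\right]_{q}q^{-ui}$ rather than a count over all $i$-dimensional subspaces of the ambient $r(d-1)$-dimensional space --- is left as ``the delicate part,'' with no mechanism supplied. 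This is not bookkeeping; it is the theorem. A union bound over kernel vectors $v$ stratified by $\dim U_{v}=i$ requires, for each relation pattern $W\subset\mathbb{F}_{q}^{j}$ of codimension $i$, a bound on the \emph{number} of vectors $v$ realizing that pattern and lying in the bad event, and your accounting multiplies the pattern count by the kill-probability $q^{-ui}$ without any such cardinality bound. The paper's resolution is a different duality: working with $\mathbf{M}_{t}$ itself (not the reduced matrix), it writes $\mathbf{M}_{i,t}\mathbf{u}^{\intercal}=\mathbf{Y}_{\mathbf{u}}\mathbf{x}_{i}^{\intercal}$ for a block-Hankel matrix $\mathbf{Y}_{\mathbf{u}}$ with $t$ block-rows, so that $\Pr(\mathbf{M}_{t}\mathbf{u}^{\intercal}=\mathbf{0})=q^{-(n-k)\,\mathrm{rank}(\mathbf{Y}_{\mathbf{u}})}$ exactly; the stratification is then by $v$-dimensional subspaces $V\subset\mathbb{F}_{q}^{t}$ contained in the left kernel of $\mathbf{Y}_{\mathbf{u}}$, and the crucial cardinality bound is $|\{\mathbf{u}:V\subset\ker_{L}\mathbf{Y}_{\mathbf{u}}\}|\leq q^{r(t-v)}$, obtained from the explicit rank estimate $\mathrm{rank}(\mathbf{A}_{V})\geq r(v-1)+rd$ for a matrix built from a row-echelon basis of $V$. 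That estimate is the missing idea in your sketch.

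Two further points would sink the write-up even if the counting were fixed. First, routing through Proposition \ref{ProbaReduc} costs you an additive term $1-c\approx q^{-u}/(q-1)$ that does not appear in $R_{j}$; you assert it can be ``absorbed'' into slack in the other terms, but you never exhibit that slack, and the claim would have to be verified for all admissible parameters. The paper avoids this entirely by bounding $\Pr(\mathrm{rank}(\mathbf{M}_{j})<r(d+j-1))\leq R_{j}$ directly, which \emph{is} $1-P_{j}\leq R_{j}$ with no leftover. Second, the assertion that $(\mathbf{Z}v,\dots,\mathbf{Z}\mathbf{A}^{j-1}v)$ ``behaves like a uniform matrix'' is not a valid step ($M_{j}(\mathbf{Z},\mathbf{A})$ is far from uniform); the generic term must instead come from a union bound over the at most $q^{r(d-1)}$ vectors $v$ with $\dim U_{v}=j$, each annihilated with probability $q^{-uj}$, which happens to give the same numerology but by a different (and correct) argument. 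The maximal-row-span property plays no role at this stage; it was consumed earlier in establishing that the $\mathbf{X}_{i}$ are uniform.
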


\begin{proof}
See Section \ref{ProofProbaLB}.
\end{proof}

 In our simulations, we observed that when $t=r(d-1)$ then, in most cases, the value of $j \in \{1,...,t \}$ that minimizes $R_j$  is equal to $\left\lceil r(d-1)/u\right\rceil +1$. These experimental observations are similar to the experimental results of Franch and Li on the probability $Q_t$ \cite{Franch2025bounded}. Thus, in the following corollary, we give a lower bound on $P_t$ for $t=\left\lceil r(d-1)/u\right\rceil +1$.
\begin{corollary}
\label{ProbaLBc} Assume $t=\left\lceil r(d-1)/u\right\rceil +1$.

\begin{description}
\item[(i)] If $t=2$, then
\begin{equation}
P_{t}\geq 1-(q+2)q^{-u}  \label{Proba_LB_1}
\end{equation}

\item[(ii)] If \ $2<t\leq u$, then
\begin{equation}
P_{t}\geq 1-q^{-u}-\frac{1}{q-1}q^{-(u-t)}-\frac{4}{q^{4}-1}q^{-2(u-t)}
\label{Proba_LB_2}
\end{equation}
\end{description}
\end{corollary}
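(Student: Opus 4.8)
The plan is to derive this corollary directly from Theorem \ref{ProbaLB}. Writing $t=\lceil r(d-1)/u\rceil+1$ means $\lceil r(d-1)/u\rceil=t-1$, hence $r(d-1)\le (t-1)u$; this single inequality is what controls the leading term $q^{r(d-1)-uj}$ of $R_{j}$. Since Theorem \ref{ProbaLB} gives $P_{t}\ge 1-\min_{1\le j\le t}\{R_{j}\}$, it suffices in each case to exhibit one admissible index $j$ whose $R_{j}$ lies below the claimed threshold; the minimum is then at most that value and the stated lower bound on $P_{t}$ follows. First I would check the hypothesis of Theorem \ref{ProbaLB}: $\lceil r(d-1)/u\rceil=t-1\le t$, and since $u\ge 1$ we have $t\le r(d-1)+1\le rd$, so the theorem applies in both cases.

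For part (i), where $t=2$, I would take $j=2$. Then $R_{2}=q^{r(d-1)-2u}+\left[\begin{array}{c} 2 \\ 1\end{array}\right]_{q}q^{-u}$, and since $\left[\begin{array}{c} 2 \\ 1\end{array}\right]_{q}=q+1$ while $r(d-1)\le u$ forces $q^{r(d-1)-2u}\le q^{-u}$, one gets $R_{2}\le q^{-u}+(q+1)q^{-u}=(q+2)q^{-u}$, which is exactly the bound (\ref{Proba_LB_1}).

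For part (ii), where $2<t\le u$, I would take $j=t$ and split $R_{t}$ into its leading term, its $i=1$ summand, and the tail $i=2,\ldots,t-1$. The leading term satisfies $q^{r(d-1)-ut}\le q^{(t-1)u-ut}=q^{-u}$, which yields the first term of (\ref{Proba_LB_2}). The $i=1$ summand is $\left[\begin{array}{c} t \\ 1\end{array}\right]_{q}q^{-u}=\frac{q^{t}-1}{q-1}q^{-u}\le \frac{1}{q-1}q^{-(u-t)}$, which yields the second term of (\ref{Proba_LB_2}). So the crux is to show that the tail is at most $\frac{4}{q^{4}-1}q^{-2(u-t)}$.

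The main obstacle, and the only non-routine step, is this tail estimate. I would bound each Gaussian coefficient using the upper bound recalled in Section \ref{SPreli}, namely $\left[\begin{array}{c} t \\ i\end{array}\right]_{q}\le 4q^{i(t-i)}$, so that the $i$-th summand is at most $4q^{i(t-i-u)}$. Setting $g(i)=i(t-i-u)$ and factoring out $g(2)$, a short computation gives $g(i)-g(2)=(i-2)(t-u-i-2)$; writing $\ell=i-2\ge 0$ and invoking the hypothesis $t\le u$ shows $t-u-\ell-4\le -4$ for $\ell\ge 1$, hence $g(i)-g(2)\le -4(i-2)$ for every $i\ge 2$. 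The tail is thereby dominated by a geometric series of ratio $q^{-4}$, giving $\sum_{i=2}^{t-1}\left[\begin{array}{c} t \\ i\end{array}\right]_{q}q^{-ui}\le 4q^{g(2)}\sum_{\ell\ge 0}q^{-4\ell}=\frac{4q^{g(2)}}{1-q^{-4}}$, and since $g(2)=-2(u-t)-4$ this equals $\frac{4}{q^{4}-1}q^{-2(u-t)}$. Summing the three contributions bounds $R_{t}$, hence $\min_{j}R_{j}$, by the right-hand side of (\ref{Proba_LB_2}), completing the plan.
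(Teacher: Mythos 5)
Your proposal is correct and follows essentially the same route as the paper: both apply Theorem \ref{ProbaLB} with the index $j=t$, bound the leading term by $q^{-u}$ via $r(d-1)\leq (t-1)u$, isolate the $v=1$ summand as $\frac{1}{q-1}q^{-(u-t)}$, and control the tail with the Gaussian-binomial bound $\left[\begin{smallmatrix} t \\ v\end{smallmatrix}\right]_{q}\leq 4q^{v(t-v)}$ followed by a geometric series of ratio $q^{-4}$. The only cosmetic difference is how the exponent estimate for the tail is organized (the paper uses $v(t-v-u)\leq (t-u-4)v+4$, you difference against $g(2)$), which amounts to the same completed square.
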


\begin{proof}
See Section \ref{ProofProbaLBc}.
\end{proof}

\begin{remark} \label{Compa}
 When $t\leq u/2$, then the lower bound given in (\ref{Proba_LB_2}) is
better than the lower bound given in (\ref{Proba_d2_1}). The lower bound
given in (\ref{Proba_LB_1}) is close to the one given in (\ref{ProbaExact_2}).
This lower bound corresponds to the case where $u\geq r(d-1)$, that is, $t=2$. Recall that $u=n-k-r$. So, the condition $u\geq r(d-1)$ is equivalent to $%
n-k\geq rd$. When $n-k\geq rd$, the probability of the first phase of the
decoding algorithm of the classical LRPC codes is $
P_{1}=\prod_{i=0}^{rd-1}(1-q^{-(n-k-i)})$, and $P_{1}\geq
1-q^{-(n-k-rd)}/(q-1)$. The lower bound given in (\ref{Proba_LB_1}) does not
depend on $d$. Thus, in the simple case where $t=2$, the probability $P_{2}$
is significantly better than $P_{1}$ as the value of $d$ increases. This
justifies the fact that BD-LRPC codes have a better decoding probability than
LRPC codes.
\end{remark}

\begin{example}
In Table \ref{TablePt}, we present the numerical values of $P_{1}$, $%
P_{r(d-1)}$, and the lower bound $B_{2}=1-\min_{1\leq j\leq 2}\{R_{j}\}$ of $%
P_{2}$, for $q=2$, $n=32$, $k=16$, $d=5$, and $1\leq r\leq 5$. Recall that $%
P_{r(d-1)}$ given in (\ref{ProbaExact_1}) corresponds to the optimal value
of $P_{t}$ when $t$ varies.

\begin{table}[h!]
\centering
\begin{tabular}{|c|c|c|c|}
\hline
$r$ & $P_{1}$ & $B_{2}$ & $P_{r(d-1)}$ \\ \hline
$1$ & $0.99953$ & $0.99991$ & $0.99995$ \\ \hline
$2$ & $0.98447$ & $0.99982$ & $0.99986$ \\ \hline
$3$ & $0.57759$ & $0.99957$ & $0.99968$ \\ \hline
$4$ & $0.00000$ & $0.99536$ & $0.99931$ \\ \hline
$5$ & $0.00000$ & $0.74854$ & $0.99858$ \\ \hline
\end{tabular}
\caption{Numerical Values of $P_{t}$}
\label{TablePt}
\end{table}
\end{example}

\subsection{Proof of Theorem \protect\ref{ProbaExact} \label{ProofProbaExact}%
}

In this subsection, we will give a proof of Theorem \ref{ProbaExact}. According to Proposition \ref{ProbaReduc},
it suffices to show that the product $\prod_{i=1}^{r}(1-q^{n-k-i})$  represents the probability that the matrix
$M_{r(d-1)}(\mathbf{Z},\mathbf{A})$, defined in (\ref{MatZA}), has rank
equal to $r(d-1)$. To calculate this probability, we will use the work of \cite{Arora2021unimodular} on unimodular matrices. As defined in \cite{Arora2021unimodular}, a matrix polynomial $\mathbf{M}$ in $F_{q}[X]^{v\times w}$ is unimodular if
the greatest common divisor of all $s\times s$ minors of $\mathbf{M}$ is
equal to $1$ where $s=\min \{v,w\}$. 

Similarly to the transformation performed in \cite[
Subsec. 4.2.]{Franch2025bounded}, let $\mathbf{B}$ be the anti-diagonal
block matrix for size $r(d-1)\times r(d-1)$ with $\mathbf{I}_{r}$ blocks on
the anti-diagonal, that is,

\begin{equation*}
\mathbf{B=}\left(
\begin{array}{cccc}
\mathbf{0} & \cdots  & \mathbf{0} & \mathbf{I}_{r} \\
\vdots  & \reflectbox{$\ddots$} & \reflectbox{$\ddots$} & \mathbf{0} \\
\mathbf{0} & \mathbf{I}_{r} & \reflectbox{$\ddots$} & \vdots  \\
\mathbf{I}_{r} & \mathbf{0} & \cdots  & \mathbf{0}%
\end{array}
\right)
\end{equation*}
Then, we have
\begin{equation*}
M_{r(d-1)}(\mathbf{Z},\mathbf{A})\mathbf{B}=M_{r(d-1)}(\mathbf{\hat{Z}},%
\mathbf{\hat{A}})
\end{equation*}
where $\mathbf{\hat{Z}}=\mathbf{ZB}$ and
\begin{eqnarray*}
\mathbf{\hat{A}} &=&\mathbf{B}^{-1}\mathbf{AB} \\
&=&\left(
\begin{array}{ccccc}
\mathbf{0} & \mathbf{I}_{r} & \mathbf{0} & \cdots  & \mathbf{0} \\
\mathbf{0} & \mathbf{0} & \mathbf{I}_{r} & \mathbf{\ddots } & \vdots  \\
\vdots  & \vdots  & \mathbf{\ddots } & \mathbf{\ddots } & \mathbf{0} \\
\mathbf{0} & \mathbf{0} & \cdots  & \mathbf{0} & \mathbf{I}_{r} \\
\mathbf{A}_{d-1} & \mathbf{A}_{d-2} & \cdots  & \mathbf{A}_{2} & \mathbf{A}%
_{1}%
\end{array}%
\right)
\end{eqnarray*}%
Thus, the rank of $M_{r(d-1)}(\mathbf{Z},\mathbf{A})$ is $r(d-1)$ if and
only if \ the rank of $M_{r(d-1)}(\mathbf{\hat{Z}},\mathbf{\hat{A}})$ is $%
r(d-1)$. Set
\begin{equation*}
\mathbf{\hat{Z}=}\left(
\begin{array}{ccc}
\mathbf{\hat{Z}}_{1} & \cdots  & \mathbf{\hat{Z}}_{d-1}%
\end{array}%
\right)
\end{equation*}%
where $\mathbf{\hat{Z}}_{j}$ is a submatrix of $\mathbf{\hat{Z}}$ of size $%
(n-k-r)\times r$ \ for $1\leq j\leq d-1$. Since
\begin{equation*}
M_{r(d-1)}(\mathbf{\hat{Z}},\mathbf{\hat{A}})^{\intercal }=\left(
\begin{array}{cccc}
\mathbf{\hat{Z}}^{\intercal } & \mathbf{\hat{A}}^{\intercal }\mathbf{\hat{Z}}%
^{\intercal } & \cdots  & \left( \mathbf{\hat{A}}^{\intercal }\right)
^{r(d-1)}\mathbf{\hat{Z}}^{\intercal }%
\end{array}%
\right)
\end{equation*}%
then, as stated in\ \cite[Remark 2.6]{Arora2021unimodular}, the rank of $%
M_{r(d-1)}(\mathbf{\hat{Z}},\mathbf{\hat{A}})^{\intercal }$ is $r(d-1)$ if
and only if the polynomial matrix $\mathbf{T}$ is unimodular, where:
\begin{eqnarray*}
\mathbf{T} &=&\left(
\begin{array}{cc}
X\mathbf{I}_{r(d-1)}-\mathbf{\hat{A}}^{\intercal } & \mathbf{\hat{Z}}%
^{\intercal }%
\end{array}%
\right)  \\
&=&\left(
\begin{array}{cccccc}
X\mathbf{I}_{r} & \mathbf{0} & \cdots  & \mathbf{0} & -\mathbf{A}%
_{d-1}^{\intercal } & \mathbf{\hat{Z}}_{1}^{\intercal } \\
-\mathbf{I}_{r} & X\mathbf{I}_{r} & \ddots  & \vdots  & -\mathbf{A}%
_{d-2}^{\intercal } & \mathbf{\hat{Z}}_{2}^{\intercal } \\
\mathbf{0} & \ddots  & \ddots  & \mathbf{0} & \vdots  & \vdots  \\
\vdots  & \ddots  & -\mathbf{I}_{r} & X\mathbf{I}_{r} & -\mathbf{A}%
_{2}^{\intercal } & \mathbf{\hat{Z}}_{d-2}^{\intercal } \\
\mathbf{0} & \cdots  & \mathbf{0} & -\mathbf{I}_{r} & X\mathbf{I}_{r}-%
\mathbf{A}_{1}^{\intercal } & \mathbf{\hat{Z}}_{d-1}^{\intercal }%
\end{array}%
\right)
\end{eqnarray*}%
Using the same elementary transformations as in the proof of \cite[Theorem
4.1]{Arora2021unimodular}, we can show that $\mathbf{T}$ is equivalent to
\begin{equation*}
\mathbf{T}^{\prime }\mathbf{=}\left(
\begin{array}{ccccc}
\mathbf{0} & \mathbf{0} & \cdots  & \mathbf{0} & \mathbf{Y} \\
\mathbf{I}_{r} & \mathbf{0} &  &  & \mathbf{0} \\
\mathbf{0} & \ddots  & \ddots  &  & \vdots  \\
\vdots  & \ddots  & \mathbf{I}_{r} & \mathbf{0} & \mathbf{0} \\
\mathbf{0} & \cdots  & \mathbf{0} & \mathbf{I}_{r} & \mathbf{0}%
\end{array}%
\right)
\end{equation*}%
where
\begin{equation*}
\mathbf{Y}=\mathbf{I}_{r,n-k}X^{d-1}+\sum_{i=1}^{d-1}\mathbf{Y}_{i}X^{i-1}
\end{equation*}%
with
\begin{equation*}
\mathbf{Y}_{i}=\left(
\begin{array}{cc}
-\mathbf{A}_{d-i}^{\intercal } & \mathbf{\hat{Z}}_{i}^{\intercal }%
\end{array}%
\right) ,\ \ i=1,\ldots ,d-1
\end{equation*}
and 
\begin{equation*}
\mathbf{I}_{r,n-k}=\left(
\begin{array}{cc}
\mathbf{I}_r & \mathbf{0}
\end{array}%
\right).
\end{equation*}
So, $\mathbf{T}$ is unimodular if and only if $\mathbf{Y}$ is unimodular.
Based on \cite[Theorem 4.1]{Arora2021unimodular}, the probability that $%
\mathbf{Y}$ is unimadular is given by $\prod_{i=1}^{r}(1-q^{i-(n-k)})$. Consequently, \newline $\prod_{i=1}^{r}(1-q^{i-(n-k)})$ represents the
probability that the rank of $M_{r(d-1)}(\mathbf{Z},\mathbf{A})$ is equal to
$r(d-1)$.

To give the bounds on $P_{t}$, it is sufficient to observe that
\begin{equation*}
1-\frac{q^{-u}}{q-1}\leq \ \prod\limits_{j=0}^{r-1}\left(
1-q^{j-(n-k)}\right) \leq 1-q^{-u-1}
\end{equation*}%
and
\begin{equation*}
1-\frac{q^{-u+1}}{q-1}\leq \prod\limits_{j=1}^{r}\left( 1-q^{j-(n-k)}\right)
\leq 1-q^{-u}.
\end{equation*}

\subsection{Proof of Proposition \ref{Proba_d2} (b) \label{ProofProba_d2}}

In this Subsection, we give the proof of the calculation of
the probability $P_{t}$ given in Proposition \ref{Proba_d2} when $d=2$ and $t=2 $. In this case, the matrix $\mathbf{M}_{t}$ defined in (\ref{Mat_t})
becomes

\begin{equation*}
\mathbf{M}_{t}=\left(
\begin{array}{ccc}
\mathbf{X}_{1} & \mathbf{X}_{2} & \mathbf{0} \\
\mathbf{0} & \mathbf{X}_{1} & \mathbf{X}_{2}%
\end{array}%
\right) .
\end{equation*}

To calculate the probability that $rank(\mathbf{M}_{t})=3r$ we will use, as
in \cite{Franch2025bounded}, some elementary transformations on $\mathbf{M}%
_{t}$. Recall that a necessary condition for $rank(\mathbf{M}_{t})=3r$ is
that $rank(\mathbf{X}_{1})=r$ and $rank(\mathbf{X}_{2})=r$.

Assume that $rank(\mathbf{X}_{1})=r$. Then there is an invertible matrix $%
\mathbf{P}$ such that
\begin{equation*}
\mathbf{PX}_{1}=\left(
\begin{array}{c}
\mathbf{I}_{r} \\
\mathbf{0}%
\end{array}
\right) .
\end{equation*}
Set%
\begin{equation*}
\mathbf{P=}\left(
\begin{array}{c}
\mathbf{P}_{1} \\
\mathbf{P}_{2}%
\end{array}%
\right)
\end{equation*}
where $\mathbf{P}_{1}$ and $\mathbf{P}_{2}$ are submatrices of $\mathbf{P}$
of sizes $r\times (n-k)$ and $(n-k-r)\times (n-k)$, respectively. Then, by the elementary transformations of the block matrices, the
matrix $\mathbf{M}_{t}$ is equivalent to the matrix:
\begin{equation*}
\left(
\begin{array}{ccc}
\mathbf{I}_{r} & \mathbf{0} & \mathbf{0} \\
\mathbf{0} & \mathbf{I}_{r} & \mathbf{0} \\
\mathbf{0} & \mathbf{0} & \mathbf{Y}%
\end{array}%
\right)
\end{equation*}
where%
\begin{equation*}
\mathbf{Y=}\left(
\begin{array}{c}
\mathbf{Z} \\
\mathbf{ZA}%
\end{array}%
\right)
\end{equation*}
with $\mathbf{Z=P}_{2}\mathbf{X}_{2}$ and $\mathbf{A=-P}_{1}\mathbf{X}_{2}$.
Therefore, $rank(\mathbf{M}_{t})=3r$ if and only if $rank(\mathbf{Y)}=r$. We
have%
\begin{equation*}
\left(
\begin{array}{c}
-\mathbf{A} \\
\mathbf{Z}%
\end{array}%
\right) =\mathbf{PX}_{2}.
\end{equation*}
Since $\mathbf{P}$ is an invertible matrix and $\mathbf{X}_{2}$ is a random
matrix, then $\mathbf{A}$ and $\mathbf{Z}$ are also random matrices.

Set $v=rank(\mathbf{Z})$. Then, there are invertible matrices $\mathbf{R}$
and $\mathbf{Q}$ such that
\begin{equation*}
\mathbf{RZQ=}\left(
\begin{array}{cc}
\mathbf{I}_{v} & \mathbf{0} \\
\mathbf{0} & \mathbf{0}%
\end{array}%
\right) .
\end{equation*}%
We have
\begin{equation*}
\left(
\begin{array}{cc}
\mathbf{R} & \mathbf{0} \\
\mathbf{0} & \mathbf{R}%
\end{array}%
\right) \mathbf{YQ}=\left(
\begin{array}{c}
\mathbf{RZQ} \\
\mathbf{RZAQ}%
\end{array}%
\right) =\left(
\begin{array}{c}
\mathbf{RZQ} \\
\mathbf{RZQQ}^{-1}\mathbf{AQ}%
\end{array}%
\right) .
\end{equation*}%
Set $\mathbf{A}^{\prime }=\mathbf{Q}^{-1}\mathbf{AQ}$  and
\begin{equation*}
\mathbf{A}^{\prime }=\left(
\begin{array}{cc}
\mathbf{A}_{1}^{\prime } & \mathbf{A}_{2}^{\prime } \\
\mathbf{A}_{3}^{\prime } & \mathbf{A}_{4}^{\prime }%
\end{array}%
\right)
\end{equation*}
where $\mathbf{A}_{1}^{\prime }$, $\mathbf{A}_{2}^{\prime }$, $\mathbf{A}%
_{3}^{\prime }$, $\mathbf{A}_{4}^{\prime }$ are submatrices of $\mathbf{A}%
^{\prime }$ of sizes $v\times v$, $v\times (r-v)$, $(r-v)\times v$, $%
(r-v)\times (r-v)$, respectively. Then, the matrix $\mathbf{Y}$ is
equivalent to the matrix
\begin{equation*}
\left(
\begin{array}{cc}
\mathbf{I}_{v} & \mathbf{0} \\
\mathbf{0} & \mathbf{A}_{2}^{\prime } \\
\mathbf{0} & \mathbf{0}%
\end{array}%
\right) .
\end{equation*}%
Therefore, $rank(\mathbf{Y)}=r$ if and only if $rank(\mathbf{A}_{2}^{\prime
})=r-v$. This implies that$\ r-v\leq v$, that is, $\ \left\lceil \frac{r}{2}%
\right\rceil \leq v$. Since $\mathbf{A}$ is a random matrix and $\mathbf{Q}$
is an invertible matrix, then $\mathbf{A}^{\prime }$ is also a random
matrix. Hence, $\mathbf{A}_{2}^{\prime }$ is a random matrix.

The above transformations show that the probability that $rank(\mathbf{M}%
_{t})=3r$ is equal to the probability that $rank(\mathbf{X}_{1})=r$, $rank(%
\mathbf{Z})=v$, and $rank(\mathbf{A}_{2}^{\prime })=r-v$ with $\left\lceil
\frac{r}{2}\right\rceil \leq v\leq \min \{r,n-k-r\}$.

The probability that $rank(\mathbf{X}_{1})=r$ is equal to the probability
that a random \newline
$(n-k)\times r$ matrix over $\mathbb{F}_{q}$ has rank $r$ which is equal to $\prod\limits_{j=0}^{r-1}\left( 1-q^{j-(n-k)}\right) $.

The probability that $rank(\mathbf{Z})=v$ is equal to the probability that a
random \newline
$(n-k-r)\times r$ matrix over $\mathbb{F}_{q}$ have rank $v$ which is equal
to $q^{-ur}\prod\limits_{i=0}^{v-1}\frac{(q^{u}-q^{i})(q^{r}-q^{i})}{%
q^{v}-q^{i}}$, where $u=n-k-r$.

The probability that $rank(\mathbf{A}_{2}^{\prime })=r-v$ is equal to the
probability that a random $v\times (r-v)$ matrix over $\mathbb{F}_{q}$ has
rank $r-v$ which is equal to $\prod\limits_{i=0}^{r-v-1}\left(
1-q^{-v+i}\right) $.

Thus, the probability that $rank(\mathbf{M}_{t})=3r$ is equal to

\begin{equation*}
\sum\limits_{v=\left\lceil \frac{r}{2}\right\rceil }^{\min
\{r,u\}}q^{-ur}\prod\limits_{j=0}^{r-1}\left( 1-q^{j-(n-k)}\right)
\prod\limits_{i=0}^{v-1}\frac{(q^{u}-q^{i})(q^{r}-q^{i})}{q^{v}-q^{i}}%
\prod\limits_{i=0}^{r-v-1}\left( 1-q^{-v+i}\right) .
\end{equation*}

\subsection{Proof of Theorem \protect\ref{ProbaLB} \label{ProofProbaLB}}

In this subsection, we will give the proof of Theorem \ref{ProbaLB}. This
proof is similar to the one presented in \cite[Section 4]{Semaev2021probabilistic}. Recall that ${M}_{t}$ is an $t(n-k)\times r(d+t-1)$ matrix
defined by
\begin{equation*}
\mathbf{M}_{t}=\left(
\begin{array}{ccccccc}
\mathbf{X}_{1} & \mathbf{X}_{2} & \cdots  & \mathbf{X}_{d} & \mathbf{0} &
\cdots  & \mathbf{0} \\
\mathbf{0} & \mathbf{X}_{1} & \mathbf{X}_{2} & \cdots  & \mathbf{X}_{d} &
\cdots  & \mathbf{0} \\
\vdots  & \ddots  & \ddots  & \ddots  &  & \ddots  & \vdots  \\
\mathbf{0} & \cdots  & \mathbf{0} & \mathbf{X}_{1} & \mathbf{X}_{2} & \cdots
& \mathbf{X}_{d}%
\end{array}%
\right)
\end{equation*}
We have
\begin{equation*}
\Pr (rank(\mathbf{M}_{t})=r(d+t-1))=1-\Pr (rank(\mathbf{M}_{t})<r(d+t-1)).
\end{equation*}%
But, $rank(\mathbf{M}_{t})<r(d+t-1)$ if and only if there is $\mathbf{u}\in
\mathbb{F}_{q}^{r(d+t-1)}\backslash \{\mathbf{0}\}$ such that $\mathbf{M}_{t}%
\mathbf{u}^{\intercal }=\mathbf{0}$. Thus,
\begin{equation*}
\Pr (rank(\mathbf{M}_{t})<r(d+t-1))\leq \sum_{\mathbf{u}\in \mathbb{F}%
_{q}^{r(d+t-1)}\backslash \{\mathbf{0}\}}\Pr (\mathbf{M}_{t}\mathbf{u}%
^{\intercal }=\mathbf{0}).
\end{equation*}%
Let $\mathbf{x}_{i,j}$ be the $i$-th row of the matrix $\mathbf{X}_{j}$, for
$i$ in $\{1,\ldots ,n-k\}$ and $j$ in $\{1,\ldots ,d\}$. Set
\begin{equation*}
\mathbf{M}_{i,t}=\left(
\begin{array}{ccccccc}
\mathbf{x}_{i,1} & \mathbf{x}_{i,2} & \cdots  & \mathbf{x}_{i,d} & \mathbf{0}
& \cdots  & \mathbf{0} \\
\mathbf{0} & \mathbf{x}_{i,1} & \mathbf{x}_{i,2} & \cdots  & \mathbf{x}_{i,d}
& \cdots  & \mathbf{0} \\
\vdots  & \ddots  & \ddots  & \ddots  &  & \ddots  & \vdots  \\
\mathbf{0} & \cdots  & \mathbf{0} & \mathbf{x}_{i,1} & \mathbf{x}_{i,2} &
\cdots  & \mathbf{x}_{i,d}%
\end{array}%
\right) .
\end{equation*}%
for $i$ in $\{1,\ldots ,n-k\}$. Then, $\mathbf{M}_{t}\mathbf{u}^{\intercal }=%
\mathbf{0}$ if and only if $\mathbf{M}_{i,t}\mathbf{u}^{\intercal }=\mathbf{0%
}$ for all $i$ in $\{1,\ldots ,n-k\}$. Set%
\begin{equation*}
\mathbf{u}=(%
\begin{array}{cccc}
\mathbf{u}_{1} & \mathbf{u}_{2} & \cdots  & \mathbf{u}_{d+t-1}),%
\end{array}%
\end{equation*}%
where $\mathbf{u}_{j}\in \mathbb{F}_{q}^{r}$ , for $j$ in $\{1,\ldots
,d+t-1\}$, and
\begin{equation*}
\mathbf{Y}_{\mathbf{u}}=\left(
\begin{array}{cccc}
\mathbf{u}_{1} & \mathbf{u}_{2} & \cdots  & \mathbf{u}_{d} \\
\mathbf{u}_{2} & \mathbf{u}_{3} & \cdots  & \mathbf{u}_{d+1} \\
\vdots  & \vdots  &  & \vdots  \\
\mathbf{u}_{t} & \mathbf{u}_{t+1} & \cdots  & \mathbf{u}_{d+t-1}%
\end{array}%
\right) .
\end{equation*}%
Then,
\begin{equation*}
\mathbf{M}_{i,t}\mathbf{u}^{\intercal }=\mathbf{Y}_{\mathbf{u}}\mathbf{x}%
_{i}^{\intercal }.
\end{equation*}%
Hence, $\mathbf{M}_{i,t}\mathbf{u}^{\intercal }=\mathbf{0}$ if and only if $%
\mathbf{Y}_{\mathbf{u}}\mathbf{x}_{i}^{\intercal }=\mathbf{0}$, which is
equivalent to saying that $\mathbf{x}_{i}^{\intercal }$ is in the right
kernel of $\mathbf{Y}_{\mathbf{u}}$. The probability that $\mathbf{x}%
_{i}^{\intercal }$ is in the right kernel of $\mathbf{Y}_{\mathbf{u}}$ is $%
q^{-rank(\mathbf{Y}_{\mathbf{u}})}$. Since $\mathbf{x}_{1},\ldots ,\mathbf{x}%
_{n-k}$ are independent, we therefore have

\begin{eqnarray*}
\Pr \left( \mathbf{M}_{t}\mathbf{u}^{\intercal }=\mathbf{0}\right)  &=&\Pr
\left( \bigwedge\limits_{i=1}^{n-k}\mathbf{M}_{i,t}\mathbf{u}^{\intercal }=%
\mathbf{0}\right)  \\
&=&\Pr \left( \bigwedge\limits_{i=1}^{n-k}\mathbf{Y}_{\mathbf{u}}\mathbf{x}%
_{i}^{\intercal }=\mathbf{0}\right)  \\
&=&\prod\limits_{i=1}^{n-k}\Pr \left( \mathbf{Y}_{\mathbf{u}}\mathbf{x}%
_{i}^{\intercal }=\mathbf{0}\right)  \\
&=&q^{-(n-k)rank(\mathbf{Y}_{\mathbf{u}})}
\end{eqnarray*}%
Consequently,
\begin{eqnarray*}
\sum_{\mathbf{u}\in \mathbb{F}_{q}^{r(d+t-1)}\backslash \{\mathbf{0}\}}\Pr
\left( \mathbf{M}_{t}\mathbf{u}^{\intercal }=\mathbf{0}\right)  &=&\sum_{%
\mathbf{u}\in \mathbb{F}_{q}^{r(d+t-1)}\backslash \{\mathbf{0}%
\}}q^{-(n-k)rank(\mathbf{Y}_{\mathbf{u}})} \\
&=&\sum_{v=0}^{t-1}N_{v}q^{-(n-k)(t-v)} \\
&&
\end{eqnarray*}%
where $N_{v}$ is the number of $\mathbf{u}\in \mathbb{F}_{q}^{r(d+t-1)}%
\backslash \{\mathbf{0}\}$ such that $rank(\mathbf{Y}_{\mathbf{u}})=t-v$.
Set
\begin{equation*}
S_{v}=\{\mathbf{u}\in \mathbb{F}_{q}^{r(d+t-1)}\backslash \{\mathbf{0}%
\}:rank(\mathbf{Y}_{\mathbf{u}})\leq t-v\}.
\end{equation*}%
Then
\begin{equation*}
N_{v}\leq |S_{v}|
\end{equation*}%
and $\mathbf{u\in }S_{v}$ if and only if there is a subspace $
V\subset \mathbb{F}_{q}^{t}$ of dimension $v$ such that $V$ is contained in
the left kernel of $\mathbf{Y}_{\mathbf{u}}$. When $v=0$ then $%
|S_{v}|=q^{r(d+t-1)}-1$. Assume that $v\neq 0$. Let
\begin{equation*}
\mathbf{B}=\left(
\begin{array}{c}
\mathbf{b}_{1} \\
\vdots  \\
\mathbf{b}_{v}%
\end{array}%
\right) =\left(
\begin{array}{ccc}
b_{1,1} & \cdots  & b_{1,t} \\
&  &  \\
b_{v,1} & \cdots  & b_{v,t}%
\end{array}%
\right)
\end{equation*}%
be a matrix in row echelon form, whose rows span the subspace $V$. As
$V$ is included in the left kernel of $\ \mathbf{Y}_{\mathbf{u}}$, then $%
\mathbf{b}_{i}\mathbf{Y}_{\mathbf{u}}=\mathbf{0}$ , for all $i$ in $\left\{
1,\ldots ,v\right\} $. We have
\begin{equation*}
\mathbf{b}_{i}\mathbf{Y}_{\mathbf{u}}=\mathbf{uA}_{\mathbf{b}_{i}}
\end{equation*}
where
\begin{equation*}
\mathbf{A}_{\mathbf{b}_{i}}=\left(
\begin{array}{ccccc}
b_{i,1}\mathbf{I}_{r} & \mathbf{0} & \mathbf{0} & \cdots  & \mathbf{0} \\
b_{i,2}\mathbf{I}_{r} & b_{i,1}\mathbf{I}_{r} & \mathbf{0} & \cdots  &
\mathbf{0} \\
\vdots  & b_{i,2}\mathbf{I}_{r} & \ddots  & \ddots  &  \\
b_{i,t}\mathbf{I}_{r} & \vdots  & \ddots  & \ddots  & \mathbf{0} \\
~\mathbf{0} & b_{i,t}\mathbf{I}_{r} &  & \ddots  & b_{i,1}\mathbf{I}_{r} \\
\mathbf{0} & \mathbf{0} & \ddots  &  & b_{i,2}\mathbf{I}_{r} \\
\vdots  & \vdots  & \ddots  & \ddots  & \vdots  \\
\mathbf{0} & \mathbf{0} & \cdots  & \mathbf{0} & b_{i,t}\mathbf{I}_{r}%
\end{array}%
\right)
\end{equation*}%
Set
\begin{equation*}
\mathbf{A}_{V}=\left( \mathbf{A}_{\mathbf{b}_{1}}
\begin{array}{cc}
\cdots  & \mathbf{A}_{\mathbf{b}_{v}}
\end{array}
\right) .
\end{equation*}
Then, $\mathbf{uA}_{V}=\mathbf{0}$. Furthermore, $r(v-1)+rd\leq rank(\mathbf{
A}_{V})$. In fact, consider the matrix.
\begin{equation*}
\widetilde{\mathbf{A}}_{V}^{{}}=\left(
\begin{array}{c}
\begin{array}{cccc}
\widetilde{\mathbf{A}}_{\mathbf{b}_{1}}^{{}} & \cdots  & \widetilde{\mathbf{A%
}}_{\mathbf{b}_{v-1}}^{{}} & \mathbf{A}_{\mathbf{b}_{v}}%
\end{array}%
\end{array}%
\right) .
\end{equation*}%
where $\widetilde{\mathbf{A}}_{\mathbf{b}_{j}}^{{}}$ is the matrix formed by
the first $r$ columns of $\mathbf{A}_{\mathbf{b}_{i}}$, for $i=1,...v-1$. \
Since $\mathbf{B}$ is in the row echelon form, the columns of $\widetilde{%
\mathbf{A}}_{V}^{{}}$ are linearly independent. Thus $rank(\widetilde{%
\mathbf{A}}_{V}^{{}})=r(v-1)+rd$ and $rank(\mathbf{A}_{V})\geq rank(%
\widetilde{\mathbf{A}}_{V}^{{}})$. In summary, $\mathbf{u\in }S_{v}$ if and
only if $\mathbf{u}$ is in the left kernel of $\mathbf{A}_{V}$ for some
subspace $V\subset \mathbb{F}_{q}^{t}$ of dimension $v$. Since the
cardinality of the left kernel of $\mathbf{A}_{V}$ is $q^{r\left(
d+t-1\right) -rank(\mathbf{A}_{V})}$, then
\begin{eqnarray*}
|S_{v}| &\leq &\sum_{din(V)=v}^{{}}q^{r\left( d+t-1\right) -rank(\mathbf{A}%
_{V})} \\
&\leq &\sum_{din(V)=v}^{{}}q^{r\left( d+t-1\right) -r(v-1)-rd} \\
&\leq &\left[
\begin{array}{c}
t \\
v%
\end{array}%
\right] _{q}q^{r(t-v)}.
\end{eqnarray*}%
As
\begin{equation*}
\left[
\begin{array}{c}
t \\
v%
\end{array}%
\right] _{q}=\left[
\begin{array}{c}
t \\
t-v%
\end{array}%
\right] _{q},
\end{equation*}%
then we have
\begin{eqnarray*}
\Pr \left( rank(\mathbf{M}_{t})<r(d+t-1)\right)  &\leq
&\sum_{v=0}^{t-1}|S_{v}|q^{-(n-k)(t-v)} \\
&\leq &q^{r(d+t-1)-(n-k)t}+\sum_{v=1}^{t-1}\left[
\begin{array}{c}
t \\
v
\end{array}
\right] _{q}q^{r(t-v)}q^{-(n-k)(t-v)} \\
&\leq &R_{t}
\end{eqnarray*}
where
\begin{equation*}
R_{t}=q^{r(d-1)-ut}+\sum_{v=1}^{t-1}\left[
\begin{array}{c}
t \\
v%
\end{array}%
\right] _{q}q^{-uv}.
\end{equation*}
Therefore,
\begin{equation*}
1-R_{t}\leq P_{t}.
\end{equation*}%
Since $P_{t}$ increases with respect to $t$, then
\begin{equation*}
\max_{1\leq j\leq t}\{1-R_{j}\}\leq P_{t}\text{.}
\end{equation*}

\subsection{Proof of Corollary \protect\ref{ProbaLBc} \label{ProofProbaLBc}}

Recall that

\begin{equation*}
R_{t}=q^{r(d-1)-ut}+\sum_{v=1}^{t-1}\left[
\begin{array}{c}
t \\
v%
\end{array}%
\right] _{q}q^{-uv}.
\end{equation*}%
and
\begin{equation*}
1-R_{t}\leq P_{t}.
\end{equation*}
As $t=\left\lceil r(d-1)/u\right\rceil +1$, then \[t\geq r(d-1)/u+1.\]
Consequently,
\begin{equation*}
q^{r(d-1)-ut}\leq q^{-u}.
\end{equation*}

(i) Assume that $t=2$. Then,
\begin{equation*}
\sum_{v=1}^{t-1}\left[
\begin{array}{c}
t \\
v%
\end{array}%
\right] _{q}q^{-uv}=(q+1)q^{-u}.
\end{equation*}%
Therefore,
\begin{equation*}
P_{2}\geq 1-(q+2)q^{-u}.
\end{equation*}

(ii) Assume that $2<t\leq u$. Then,
\begin{equation*}
\sum_{v=1}^{t-1}\left[
\begin{array}{c}
t \\
v%
\end{array}%
\right] _{q}q^{-uv}=\left[
\begin{array}{c}
t \\
1%
\end{array}%
\right] _{q}q^{-u}+\sum_{v=2}^{t-1}\left[
\begin{array}{c}
t \\
v%
\end{array}%
\right] _{q}q^{-uv}.
\end{equation*}%
As
\begin{equation*}
\left[
\begin{array}{c}
t \\
v
\end{array}%
\right] _{q}\leq 4q^{v(t-v)}
\end{equation*}%
then,
\begin{equation*}
\left[
\begin{array}{c}
t \\
v
\end{array}
\right] _{q}q^{-uv}\leq 4q^{v(t-v-u)}.
\end{equation*}
If $2\leq v$ then \[v(t-v-u)\leq \allowbreak \left( t-u-4\right) v+4.\]
Thus,
\begin{eqnarray*}
\sum_{v=2}^{t-1}\left[
\begin{array}{c}
t \\
v%
\end{array}%
\right] _{q}q^{-uv} &\leq &4\sum_{v=2}^{t-1}q^{\allowbreak \left(
t-u-4\right) v+4} \\
&\leq &4q^{\allowbreak 2\left( t-u-4\right) +4}\frac{1-q^{\left( \allowbreak
t-u-4\right) (t-2)}}{1-q^{\left( t-u-4\right) }} \\
&&
\end{eqnarray*}%
Since $2<t\leq u$, then
\begin{equation*}
1-q^{\left( \allowbreak t-u-4\right) (t-2)}\leq 1
\end{equation*}%
and
\begin{equation*}
\frac{1}{1-q^{\left( t-u-4\right) }}\leq \frac{1}{1-q^{-4}}.
\end{equation*}%
Therefore,%
\begin{equation*}
\sum_{v=2}^{t-1}\left[
\begin{array}{c}
t \\
v%
\end{array}%
\right] _{q}q^{-uv}\leq \frac{4q^{\allowbreak 2\left( t-u\right) }}{q^{4}-1}.
\end{equation*}%
Consequently,
\begin{equation*}
R_{t}\leq q^{-u}+\frac{1}{q-1}q^{-(u-t)}+\frac{4}{q^{4}-1}q^{-2(u-t)}
\end{equation*}%
\begin{equation*}
\end{equation*}

\section{Conclusion \label{SCon}}
\ \ \ \ \  In this paper, we have enhanced the decoding algorithm for Bounded-Degree Low-Rank Parity-Check
(BD-LRPC) codes by introducing a novel strategy in the second phase of the decoding process. This strategy employs a succession of appropriate intersections to effectively recover the error support. We then discuss improvements to the third phase of the decoding algorithm, building on the work of previous researchers \cite{Renner2020low}.

The success probability of the first phase of the decoding algorithm is contingent upon the probability $P_{t}$ that the matrix of the syndrome space expansion achieves full rank. Franch and Li provided a formula to calculate $P_{t}$ when $d = 2$ and $t = r$ \cite{Franch2025bounded}. In our work, we have established a lower bound for $P_{t}$ and presented a formula for the case where $t = r(d-1)$, as well as for the case where $d = 2$ and $t = 2$. The calculation of $P_t$ in the general case where $t = r(d-1)$ allowed us to prove the conjecture of Franch and Li \cite[Conjecture 1]{Franch2025bounded}. 

It is worth mentioning that the rank metric has been defined over finite rings \cite{Kamche2019rank}, and LRPC
codes have been extended to this context \cite{Renner2020low,Renner2021low,Kamwa2021generalization,Kamche2024low}. Given the superior performance of BD-LRPC codes,
exploring the generalization of this family of rank-metric codes over finite rings presents an intriguing
avenue for future research. This expansion could potentially lead to even more robust coding
schemes with enhanced error-correcting capabilities.

\bibliographystyle{IEEEtran}
\bibliography{BoundedLRPC_CodeBib}

\end{document}